\newenvironment{varalgorithm}[1]
  {\algorithm}
  {\endalgorithm}
\let\bbordermatrix\bordermatrix
\patchcmd{\bbordermatrix}{8.75}{4.75}{}{}
\patchcmd{\bbordermatrix}{\left(}{\left[}{}{}
\patchcmd{\bbordermatrix}{\right)}{\right]}{}{}
\def\IID{\mathop{\mathrm{IID}}}  
\def\diag{\mathop{\mathrm{diag}}}   
\def\trace{\mathop{\mathrm{trace}}} 
\def\op{\mathop{\mathrm{op}}}       
\def\rhs{\mathop{\mathrm{RHS}}}     
\def\mse{\mathop{\mathrm{MSE}}}     
\def\mmse{\mathop{\mathrm{MMSE}}}     
\def\nrdf{\mathop{\mathrm{NRDF}}}   
\def\rdf{\mathop{\mathrm{RDF}}}   
\def\rvs{\mathop{\mathrm{RVs}}}     
\def\rv{\mathop{\mathrm{RV}}}       
\def\pmf{\mathop{\mathrm{PMF}}}     
\def\mc{\mathop{\mathrm{MC}}}       
\def\ssum{\mathop{\mathrm{sum}}}       
\def\lqg{\mathop{\mathrm{LQG}}}       
\def\ecdq{\mathop{\mathrm{ECDQ}}}       
\def\dpcm{\mathop{\mathrm{DPCM}}}       
\newcommand{\T}{^{\mbox{\tiny T}}}
\newtheorem{theorem}{Theorem}
\newtheorem{lemma}{Lemma}
\newtheorem{definition}{Definition}
\newtheorem{corollary}{Corollary}
\newtheorem{remark}{Remark}
\newcommand{\sr}{\stackrel}
\newcommand{\be}{\begin{equation}}
\newcommand{\ee}{\end{equation}}
\newcommand{\bea}{\begin{eqnarray}}
\newcommand{\eea}{\end{eqnarray}}
\newcommand{\bes}{\begin{eqnarray*}}
\newcommand{\ees}{\end{eqnarray*}}
\newcommand{\bfi}{\begin{figure}}
\newcommand{\bfit}{\begin{figure}[t]}
\newcommand{\bfib}{\begin{figure}[b]}
\newcommand{\bfih}{\begin{figure}[h]}
\newcommand{\bfip}{\begin{figure}[p]}
\newcommand{\efi}{\end{figure}}
\newcommand{\bi}{\begin{itemize}}
\newcommand{\ei}{\end{itemize}}
\newcommand{\ben}{\begin{enumerate}}
\newcommand{\een}{\end{enumerate}}
\newcommand*{\QEDA}{\hfill\ensuremath{\blacksquare}}%
\begin{document}

\sloppy
\title{Sequential Source Coding for Stochastic Systems Subject to Finite Rate Constraints} 

\author{{\IEEEauthorblockN{Photios A. Stavrou, Mikael Skoglund and Takashi Tanaka
\thanks{P. A. Stavrou and M. Skoglund received funding by the KAW Foundation and the Swedish Foundation for Strategic Research.}
\thanks{\IEEEauthorblockA{Photios A. Stavrou and Mikael Skoglund are with the Division of Information Science and Engineering, KTH Royal Institute of Technology, Stockholm, Sweden}, {\it emails:\{fstavrou,skoglund\}@kth.se}}
\thanks{\IEEEauthorblockA{T. Tanaka is with the Department of Aerospace Engineering and Engineering Mechanics, University of Austin, TX, USA}, {\it email: ttanaka@utexas.edu}}
 }
 }}
%
%
\maketitle

%
%
%
%
\begin{abstract}
In this paper, we revisit the sequential source coding framework to analyze fundamental performance limitations of discrete-time stochastic control systems subject to feedback data-rate constraints in finite-time horizon. The basis of our results is a new characterization of the lower bound on the minimum total-rate achieved by sequential codes subject to a total (across time) distortion constraint and a computational algorithm that allocates optimally the rate-distortion for any fixed finite-time horizon. This characterization facilitates the derivation of {\it analytical}, {\it non-asymptotic},  and {\it finite-dimensional} lower and upper bounds in two control-related scenarios. (a) A parallel time-varying Gauss-Markov process with identically distributed spatial components that is quantized and transmitted through a noiseless channel to a minimum mean-squared error (MMSE) decoder. (b) A time-varying quantized LQG closed-loop control system, with identically distributed spatial components and with a random data-rate allocation. Our non-asymptotic lower bound on the quantized LQG control problem, reveals the absolute minimum data-rates for (mean square) stability of our time-varying plant for any fixed finite time horizon. We supplement our framework with illustrative simulation experiments.
\end{abstract}

\begin{IEEEkeywords}
sequential causal coding, finite-time horizon, bounds, quantization, stochastic systems, reverse-waterfilling.
\end{IEEEkeywords}

%
%
%
%
\section{Introduction}\label{sec:introduction}

\par One of the fundamental characteristics of networked control systems (\text{NCSs}) \cite{zhang:2016} is the existence of an imperfect communication network between computational and physical entities. In such setups, an analytical framework to assess impacts of communication and data-rate limitations on the control performance is strongly required.  
\par In this paper, we adopt information-theoretic tools to analyze these requirements. Specifically, we consider {\it sequential coding of correlated sources} {initially introduced by} \cite{viswanathan:2000} (see also \cite{tatikonda:2000}) (see Fig. \ref{fig:sequential_coding}), which is a generalization of the successive refinement source coding problem \cite{koshelev:1980,equitz:1991,rimoldi:1994}. In successive refinement, source coding is performed in (time) stages where one first describes the given source within a few bits of information and, then, tries to ``refine'' the description of the same source (at the subsequent stages) when more information is available. Sequential coding differs from successive refinement in that at the second stage, encoding involves describing a correlated (in time) source as opposed to improving the description of the same source. To accomplish this task, sequential coding encompasses a spatio-temporal coding method. 
\begin{figure*}[htp]
\centering
\includegraphics[width=0.7\textwidth]{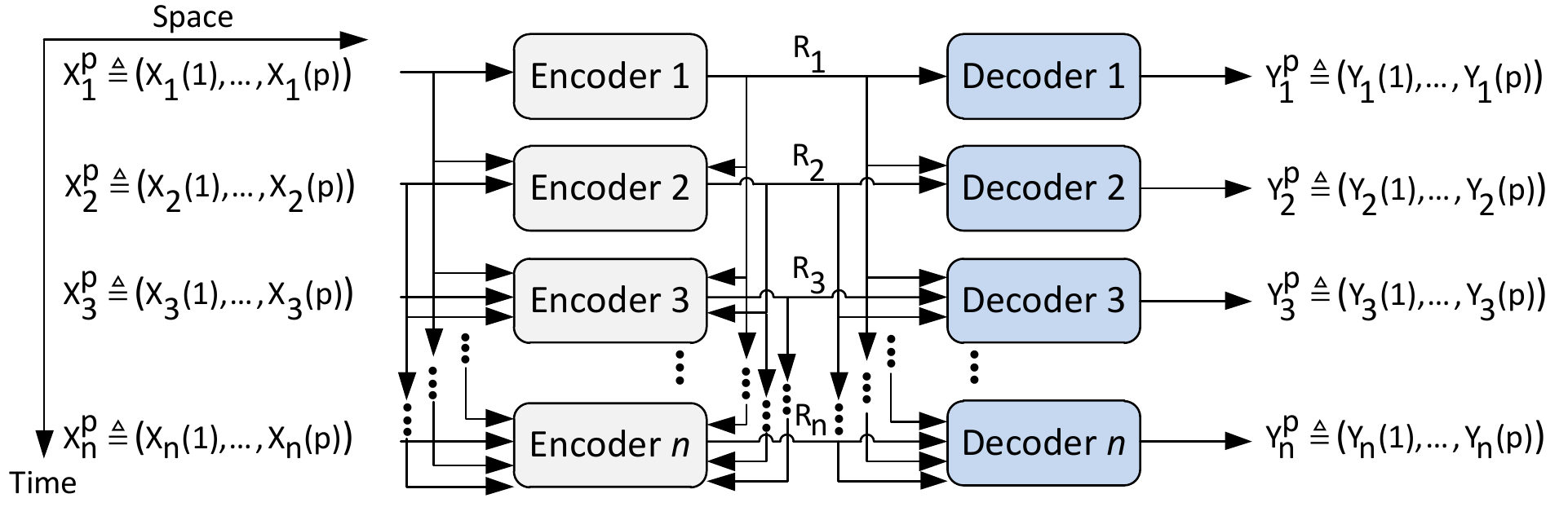}
\caption{Sequential coding of correlated sources.} 
\label{fig:sequential_coding}
\end{figure*}
In addition, sequential coding is a temporally zero-delay coding paradigm since both {\it encoding} and {\it decoding} must occur in real-time. The resulting zero-delay coding approach should not be confused with other existing works on zero-delay coding, see, e.g., \cite{witsenhausen:1979,gaarder-slepian:1982,teneketzis:2006,linder:2014,wood:2017,stavrou:2018stsp}, because it relies on the use of a spatio-temporal coding approach (see Fig. \ref{fig:sequential_coding}) whereas the aforementioned papers rely solely on temporal coding approaches. 

\subsection{{Literature review on sequential source coding}}\label{subsec:literature_review}

{In what follows, we provide a detailed literature review on sequential source coding. However, in order to shed more light on the historical route of this coding paradigm, we distinguish the work of \cite{viswanathan:2000} (see also \cite{ma-ishwar:2011,yang:2011}) with the work of \cite{tatikonda:2000} because although their results complement each other, their underlining motivation has been different. Indeed, \cite{viswanathan:2000} initiated this coding approach targeting video coding applications, whereas \cite{tatikonda:2000} aimed to develop a framework for delay-constrained systems and to study the communication theory in classical closed-loop control setups.}

\paragraph*{{Sequential coding via \cite{viswanathan:2000}}} The authors of \cite{viswanathan:2000} characterized the minimum achievable rate-distortion region for two temporally correlated random variables with each being a vector of spatially independent and identically distributed ($\IID$) processes (also called ``frames'' or spatial vectors), subject to a coupled average distortion criterion.  
The last decade, sequential coding approach of \cite{viswanathan:2000}  was further studied in \cite{ma-ishwar:2011,yang:2011,yang:2014}. In \cite{ma-ishwar:2011}, the authors used an extension of the framework of \cite{viswanathan:2000} to three time instants subject to a per-time distortion constraint to investigate the effect of sequential coding when possible coding delays occur within a multi-input multi-output system. Around the same time, \cite{yang:2011} generalized the framework of \cite{viswanathan:2000} to a finite number of time instants. Compared to \cite{viswanathan:2000} and \cite{ma-ishwar:2011}, their spatio-temporal source process is correlated over time whereas each frame is spatially jointly stationary and totally ergodic subject to a per-time average distortion criterion. More recently, the same authors in \cite{yang:2014} drew connections between sequential causal coding and {\it predictive sequential causal coding}, that is, for (first-order) Markov sources subject to a single-letter fidelity constraint, sequential causal coding and sequential predictive coding coincide. For three time instants of an $\IID$ vector source containing jointly Gaussian correlated processes (not necessarily Markov) an explicit expression of the minimum achievable sum-rate for a per-time mean-squared error ($\mse$) distortion is obtained in \cite{torbatian:2012}. Inspired by the framework of \cite{viswanathan:2000,ma-ishwar:2011}, Khina {\it et al.} in \cite{khina:2018}  derived fundamental performance limitations in control-related applications. In their work, they considered a multi-track system that tracks several parallel time-varying Gauss-Markov processes with $\IID$ spatial components conveyed over a single shared wireless communication link (possibly prone to packet drops) to a minimum mean-squared error ($\mmse$) decoder. In their Gauss-Markov multi-tracking scenario, they provided lower and upper bounds in finite-time and in the per unit time asymptotic limit for the distortion-rate region of time-varying Gauss-Markov sources subject to a mean-squared error ($\mse$) distortion constraint. Their lower bound is characterized by a forward in time distortion allocation algorithm operating with {\it given data-rates at each time instant for a finite time horizon} whereas their upper bound is obtained by means of a differential pulse-code modulation (\text{DPCM}) scheme using entropy coded dithered quantization (\text{ECDQ}) using one dimensional lattice constrained by {\it data rates averaged across time} (for details on this coding scheme, see, e.g., \cite{farvardin:1985,zamir:2014}). Subsequently, they used these bounds in a scalar-valued quantized linear quadratic Gaussian (\text{LQG}) closed-loop control problem to find similar bounds on the minimum cost of control.
\paragraph*{{Sequential coding via \cite{tatikonda:2000}}} {A similar framework to \cite{viswanathan:2000} was independently introduced and developed by Tatikonda in \cite[Chapter 5]{tatikonda:2000} (see also \cite{borkar:2001}) in the context of delay-constrained and control-related applications. Tatikonda in \cite{tatikonda:2000}, introduced an information theoretic quantity called {\it sequential rate distortion function ($\rdf$)} that is attributed to the works of Gorbunov and Pinsker in \cite{gorbunov-pinsker1972b,gorbunov-pinsker1972a}. Using the sequential $\rdf$, Tatikonda {\it et al.} in \cite{tatikonda:2004} studied the performance analysis and synthesis of a multidimensional fully observable time-invariant Gaussian closed-loop control system when a communication link exists between a stochastic linear plant and a controller whereas the performance criterion is the classical linear quadratic cost. The use of sequential $\rdf$ (also termed nonanticipative or causal $\rdf$ in the literature) in filtering applications is stressed in \cite{charalambous:2014,tanaka:2017tac,stavrou:2018siam}. Analytical expressions of lower and upper bounds for the setup of \cite{tatikonda:2004} including the cases where a linear fully observable time-invariant plant is driven by $\IID$ non-Gaussian noise processes or when the system is modeled by time-invariant partially observable Gaussian processes are derived in \cite{kostina:2019}. Tanaka {\it et al.} in \cite{tanaka:2016,tanaka:2018} studied the performance analysis and synthesis of a linear fully observable and partially observable Gaussian closed loop control problem when the performance criterion is the linear quadratic cost. Moreover, they showed that one can derive lower bounds in finite time and in the per unit time asymptotic limit by casting the problems as semidefinite representable and thus numerically computable by known solvers. An achievability bound on the asymptotic limit using a \text{DPCM}-based $\ecdq$ scheme that uses one dimensional quantizer at each dimension was also proposed. Lower and upper bounds for a general closed-loop control system subject to {\it  asymptotically average total data-rate constraints} across the time are also investigated in \cite{silva:2011,silva:2016}. The lower bounds are obtained using sequential coding and directed information \cite{massey:1990} whereas the upper bounds are obtained via a sequential $\ecdq$ scheme using scalar quantizers.}

\subsection{Contributions}\label{subsec:contributions}

{In this paper, we first revisit the sequential coding framework developed by \cite{viswanathan:2000,tatikonda:2000,ma-ishwar:2011,yang:2011} to obtain the following main results.
\begin{itemize}
\item[{\bf (1)}]  {Analytical} non-asymptotic {and finite-dimensional} lower and upper bounds on the minimum achievable total-rates (per-dimension) for a multi-track communication scenario similar to the one considered in \cite{khina:2018}. However, compared to \cite{khina:2018}, who derived distortion-rate bounds via forward recursions with {\it given data rates across a finite time horizon}, here we derive a lower bound subject to a dynamic reverse-waterfilling algorithm in which for a given distortion threshold $D>0$ we optimally assign the data-rates and the $\mse$ distortions at each time instant for a finite time horizon (Theorem \ref{theorem:fundam_lim_mmse}). We also implement our algorithm in Algorithm \ref{algo1}. Our lower bound is the basis to derive our upper bound on the minimum achievable total-rates (per dimension) using a sequential $\dpcm$-based $\ecdq$ scheme that is constrained by total-rates for a finite time horizon. For the specific rate constraint we use a dynamic reverse-waterfilling algorithm obtained from our lower bound to allocate the rate and the $\mse$ distortion at each time instant for the whole finite time horizon. This rate constraint is the fundamental difference compared to similar upper bounds derived in \cite[Theorem 6]{khina:2018} and \cite[Corollary 5.2]{silva:2011} (see also \cite{silva:2016,stavrou:2018stsp}) that {\it restrict their transmit rates to have fixed rates that are averaged across the time horizon or that are asymptotically averaged across the time}.
\item[{\bf (2)}] We obtain {analogous bounds to {\bf (1)}} on the minimum achievable total (across time) cost-rate function of control (per-dimension) for a \text{NCS} with time-varying quantized {$\lqg$} closed-loops {operating with data-rate obtained subject to a solution of a reverse-waterfilling algorithm} (Theorems \ref{theorem:fund_limt_lqg}, \ref{theorem:achievability_lqg}). 
\end{itemize}
}
{\noindent{\it Discussion of the contributions and additional results.} The non-asymptotic lower bound in {\bf (1)} is obtained because for parallel processes all involved matrices in the characterization of the corresponding optimization problem {\it commute by pairs} \cite[p. 5]{harville:1997} thus they are {\it simultaneously diagonalizable} by an orthogonal matrix \cite[Theorem 21.13.1]{harville:1997} and the resulting optimization problem simplifies to one that resembles scalar-valued processes. The upper bound in {\bf (1)} is obtained because we are able to employ a lattice quantizer \cite{zamir:2014} using a quantization scheme with existing performance guarantees such as the $\dpcm$-based $\ecdq$ scheme and using existing approximations from quantization theory for high-dimensional but possibly finite-dimensional quantizers with a $\mse$ performance criterion (see, e.g., \cite{conway-sloane1999}). {The non-asymptotic bounds derived in {\bf (2)} are obtained using the so-called ``weak separation principle'' of quantized $\lqg$ control (for details, see \S\ref{sec:ncs}) and well-known inequalities that are used in information theory. Interestingly, our lower bound in {\bf (2)} also reveals the minimum allowable data rates on the cost-rate (or rate-cost) function in control at each time instant to ensure (mean square) stability of the plant (see e.g., \cite{nair:2004} for the definition) for the specific \text{NCS} (Remark \ref{remark:technical_remarks_lower_bound})}. Finally, for every bound in this paper, we derive the corresponding bounds in the infinite time horizon recovering several known results in the literature (see Corollaries \ref{corollary:steady_state_rev_water}-\ref{corollary:steady_state_lqg_upper}).} 
\par {This paper is organized as follows. In \S\ref{sec:preliminaries} we give an overview of known results on sequential coding. In \S\ref{sec:examples} we derive non-asymptotic bounds   and their corresponding per unit time asymptotic limits for a quantized state estimation problem. In \S\ref{sec:ncs}, we use the results of \S\ref{sec:examples} and the weak separation principle to derive non-asymptotic bounds and their corresponding per unit time asymptotic limits for a quantized $\lqg$ closed-loop control problem. In \S\ref{sec:discussion} we discuss several open questions that can be answered based on this work and draw conclusions in \S\ref{sec:conclusions}. 
}
\paragraph*{\bf Notation} $\mathbb{R}$ is the set of real numbers, $\mathbb{N}_{1}$ is the set of positive integers, and $\mathbb{N}_1^n\triangleq\{1,\ldots,n\}$,~ $n\in\mathbb{N}_1$, respectively.  Let $\mathbb{X}$ be a finite-dimensional Euclidean space, and ${\cal B}(\mathbb{X})$ be the Borel $\sigma$-algebra on $\mathbb{X}$. A random variable ($\rv$) $X$ defined on some probability space ($\Omega, {\cal F}, {\bf P}$)
is a map $X : \Omega\mapsto \mathbb{X}$. The probability distribution of a $\rv$ $X$ with realization $X=x$ on $\mathbb{X}$ is denoted by ${\bf P}_X\equiv{p}(x)$. The conditional distribution of a $\rv$ $Y$ with realization  $Y=y$, given $X=x$ is denoted by ${\bf Q}_{Y|X}\equiv {q}(y|x)$. We denote the sequence of one-sided $\rvs$ by $X_{t,j} \triangleq (X_{t}, X_{t+1}, \ldots,X_j),~{t}\leq{j},~(t,j)\in {\mathbb N}_1\times\mathbb{N}_1$, and their values by $x_{t,j} \in  {\mathbb X}_{t,j} \triangleq \times_{k={{t}}}^j {\mathbb X}_k$. We denote the sequence of ordered $\rvs$ with ``$i^{\text{th}}$'' spatial components by $X_{t,j}^i$, so that $X_{t,j}^i$ is a vector of dimension ``$i$'', and their values by $x_{t,j}^i \in  {\mathbb X}_{t,j}^i \triangleq \times_{k={{t}}}^j {\mathbb X}^i_k$, where ${\mathbb X}^i_k\triangleq\left(\mathbb{X}_k(1),\ldots\mathbb{X}_k(i)\right)$. The notation ${X}\leftrightarrow{Y}\leftrightarrow{Z}$ denotes a Markov Chain ($\mc$) which means that $p(x|y,z)=p(x|y)$. We denote the diagonal of a square matrix by $\diag(\cdot)$ and the $p\times{p}$ identity matrix by $I_p$. If $A\in\mathbb{R}^{p{\times}{p}}$, we denote by $A\succeq{0}$ (resp., $A\succ{0}$) a positive semidefinite matrix (resp., positive definite matrix). We denote the determinant and trace of some matrix $A\in\mathbb{R}^{p\times{p}}$ by $|A|$ and $\trace(A)$, respectively. {We denote by $h(x)$ (resp. $h(x|y)$) the differential entropy of a distribution $p(x)$ (resp. $p(x|y)$). We denote ${\cal D}(P||Q)$ the relative entropy of probability distributions $P$ and $Q$. We denote by ${\bf E}\{\cdot\}$ the expectation operator and $||\cdot||_2$ the Euclidean norm.} Unless otherwise stated, when we say ``total'' distortion, ``total-rate'' or ``total-cost'' we mean with respect to time. Similarly, by referring to ``average total'' we mean normalized over the total finite time horizon. 

%
%
%
%
\section{Known Results on Sequential Coding}\label{sec:preliminaries}

\par {In this section, we give an overview of the sequential causal coding introduced and analyzed independently by \cite[Chapter 5]{tatikonda:2000} and \cite{viswanathan:2000,ma-ishwar:2011,yang:2011}. We merge both frameworks because some results obtained in \cite{ma-ishwar:2011,yang:2011} complement the results of \cite[Chapter 5]{tatikonda:2000} and vice versa.} 
\par In the following analysis, we will consider processes for a fixed time-span $t\in\mathbb{N}^n_1$, i.e., ($X_1,\ldots,X_{n}$). Following \cite{ma-ishwar:2011,yang:2011}, we assume that the sequences of $\rvs$ are defined on alphabet spaces with finite cardinality. 
Nevertheless, these can be extended following for instance the techniques employed in \cite{oohama:1998} to continuous alphabet spaces as well (i.e., Gaussian processes) with $\mse$ distortion constraints. 



First, we use some definitions (with slight modifications to ease the readability of the paper) from \cite[\S{II}]{ma-ishwar:2011} and \cite[\S{I}]{yang:2011}.

\begin{definition}(Sequential causal coding)\label{def:sequential_codes_iid}
A spatial order $p$ sequential causal code ${\cal C}_p$ for the (joint) vector source ($X_1^p,X_2^p,\ldots,X_n^p$) is formally defined by a sequence of encoder and decoder pairs ($f^{(p)}_1,g^{(p)}_1$),$\ldots$,($f^{(p)}_n,g^{(p)}_n$) such that
\begin{align}
\begin{split}\label{seq:coding_functions}
f_t^{(p)}&:~\mathbb{X}^p_{1,t}\times\underbrace{\{0,1\}^*\times\ldots\times\{0,1\}^*}_{t-1~times}\longrightarrow\mathbb\{0,1\}^*\\
g_t^{(p)}&:~\underbrace{\{0,1\}^*\times\ldots\times\{0,1\}^*}_{t~times}\longrightarrow\mathbb{Y}^p_t,~t\in\mathbb{N}_1^n
\end{split},
\end{align}
where $\{0,1\}^*$ denotes the set of all binary sequences of finite length satisfying the property that at each time instant $t$ the range of $\{f_t:~t\in\mathbb{N}_1^n\}$ given any $t-1$ binary sequences is an instantaneous code. Moreover, the encoded and reconstructed sequences of $\{X_t^p:~t\in\mathbb{N}_1^n\}$ are given by $S_t=f_t(X^p_{1,t},S_{1,t-1})$, with $S_t\in\mathbb{S}_t\subset\{0,1\}^*$, and $Y_t^p=g_t(S_{1,t})$, respectively, with $|\mathbb{Y}_t|<\infty$. Moreover, the expected rate in bits per symbol at each time instant (normalized over the spatial components) is defined as
\begin{align}
r_t\triangleq\frac{{\bf E}|S_t|}{p},~t\in\mathbb{N}_1^n,\label{coding_rate:sequential}
\end{align}
where $|S_t|$ denotes the length of the binary sequence $S_t$. 
\end{definition}
\paragraph*{Distortion criterion} For each $t\in\mathbb{N}_1^n$, we consider a  total (in dimension) single-letter distortion criterion. This means that the distortion between $X^p_t$ and $Y^p_t$ is measured by a function $d_t:~\mathbb{X}^p_t\times\mathbb{Y}^p_t\longrightarrow[0,\infty)$ with maximum distortion $d_t^{\max}=\max_{x^p_t,y^p_t}d_t(x^p_t,y_t^p)<\infty$ such that 
\begin{align}
d_t(x^p_t,y^p_t)\triangleq\frac{1}{p}\sum_{i=1}^p{d}_t(x_t(i),y_t(i)).\label{distortion_function}
\end{align}
The per-time average distortion is defined as
\begin{align} 
{\bf E}\left\{d_t(X_t^p,Y_t^p)\right\}\triangleq\frac{1}{p}\sum_{i=1}^p{\bf E}\left\{{d}_t(X_t(i),Y_t(i))\right\}.\label{average_distortion}
\end{align}
We remark that {the following results are still valid even if} the distortion function \eqref{distortion_function} has dependency on previous reproductions $\{Y^p_{1,t-1}:~t\in\mathbb{N}_1^n\}$ (see, e.g., \cite{ma-ishwar:2011}). 
\begin{definition}(Achievability)\label{def:seq:operation_meaning}
 A rate-distortion tuple $(R_{1,n},D_{1,n})\triangleq(R_1,\ldots,R_n,D_1,\ldots,D_n)$ for any ``$n$'' is said to be {\it achievable} for a given sequential causal coding system if for all $\epsilon>0$, there exists a sequential code $\{(f^{(p)}_t,g^{(p)}_t):~t\in\mathbb{N}_1^n\}$ {such that there exists ${\cal P}$ for which
\begin{align}
\begin{split}\label{seq:achievability}
r_t&\leq{R_t}+\epsilon,\\
{\bf E}\left\{d_t(X_t^p,Y_t^p)\right\}&\leq{D_t}+\epsilon,~D_t\geq{0},~\forall{t}\in\mathbb{N}_1^n,
\end{split}
\end{align}
holds $\forall{p}\geq{\cal P}$}. Moreover, let the set of all achievable rate-distortion tuples $(R_{1,n},D_{1,n})$ be denoted by ${\cal R}^{*}$. Then, the minimum total-rate required to achieve the distortion tuple $(D_{1},~D_2,\ldots,D_n)$ is defined by the following optimization problem:
\begin{align}
{\cal R}^{\op}_{\ssum}(D_{1,n})\triangleq\inf_{(R_{1,n},D_{1,n})\in{\cal R}^{*}}\sum_{t=1}^nR_t.\label{operational_min_sum_rate}
\end{align}
\end{definition}

\paragraph*{Source model}  {The finite alphabet source randomly generates symbols $X^p_{1,n}=x^p_{1,n}\in\mathbb{X}^p_{1,n}$ according to the following temporally correlated joint probability mass function ($\pmf$) 
\begin{align}
{p}(x_{1,n}^p)\triangleq\otimes_{i=1}^p{p}(x_1(i),\ldots,x_{n}(i)),\label{source_correlated_1}
\end{align}
where the joint process $\{(X_1(i),\ldots,X_n(i))\}_{i=1}^p$ is identically distributed. This means that for each $i=1,\ldots,p$, the temporally correlated joint process $(X_1(i),\ldots,X_{n}(i))$ is independent of every other temporally correlated joint process $(X_1(j),\ldots,X_{n}(j))$, such that $i\neq{j}$. Furthermore, each temporally correlated joint process $(X_1(i),\ldots,X_{n}(i))$ is spatially identically distributed.} 

\paragraph*{Achievable rate-distortion regions and minimum achievable total-rate} Next, we characterize the achievable rate-distortion regions and the minimum achievable total-rate for the source model \eqref{source_correlated_1} with the distortion constraint \eqref{average_distortion}.

\par The following lemma is given in \cite[Theorem 5]{yang:2011}.  
\begin{lemma}(Achievable rate-distortion region)\label{lemma:coding_theorem_iid}
Consider the source model \eqref{source_correlated_1} with the average distortion of \eqref{average_distortion}. {Then, the {``spatially''} single-letter characterization of the rate-distortion region $(R_{1,n},~D_{1,n})$ is given by}:
\begin{align}
\begin{split}
{\cal R}^{\IID}&=\Bigg\{(R_{1,n},D_{1,n})\Bigg{|}\exists S_{1,n-1}, Y_{1,n},~\{g_t(\cdot)\}_{t=1}^n,\\
\text{s.t.}&\quad R_1\geq{I}(X_1;S_1),~\mbox{(initial time)}\\
&\quad R_t\geq{I}(X_{1,t};S_t|S_{1,t-1}),~t=2,\ldots,n-1,\\
&\quad R_n\geq{I}(X_{1,n};Y_n|S_{1,n-1}),~\mbox{(terminal time)},\\
&\quad D_t\geq{\bf E}\left\{d_t(X_t,Y_t)\right\},~t\in\mathbb{N}_1^n,\\
&\quad Y_1=g_1(S_1),~Y_{t}=g_t(S_{1,t}),~t=2,\ldots,n-1,\\
&\quad S_1\leftrightarrow(X_{1})\leftrightarrow{X}_{2,n},\\
&\quad S_t\leftrightarrow(X_{1,t},S_{1,t-1})\leftrightarrow{X}_{t+1,n},~t=2,\ldots,n-1\Bigg\},
\end{split}\label{single_letter_char_iid}
\end{align} 
where $\{S_{1,n-1},Y_{1,n}\}$ are the auxiliary (encoded) and reproduction $\rvs$, respectively, taking values in some finite alphabet spaces $\{\mathbb{S}_{1,n-1},\mathbb{Y}_{1,n}\}$, and $\{g_t(\cdot):~t\in\mathbb{N}_1^n\}$ are deterministic functions. 
\end{lemma}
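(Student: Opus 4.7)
The plan is to establish the lemma by the two standard halves of a coding theorem: a converse showing that every achievable tuple $(R_{1,n}, D_{1,n})$ must lie in ${\cal R}^{\IID}$, and a direct/achievability part showing that every tuple strictly inside ${\cal R}^{\IID}$ is achievable by some sequential causal code.

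For the converse, I would start from a sequential code with encoders $S_t = f_t^{(p)}(X_{1,t}^p, S_{1,t-1})$ and decoders $Y_t^p = g_t^{(p)}(S_{1,t})$ achieving $r_t \leq R_t + \epsilon$ and ${\bf E}\{d_t(X_t^p,Y_t^p)\} \leq D_t + \epsilon$. The prefix-code inequality yields $p \cdot r_t = {\bf E}|S_t| \geq H(S_t|S_{1,t-1}) \geq I(X_{1,t}^p; S_t | S_{1,t-1})$, a $p$-letter version of each rate bound. I would then single-letterize only across the spatial dimension (which is $\IID$) while leaving the temporal index intact: introduce a uniform random spatial coordinate $Q \in \mathbb{N}_1^p$ independent of the source, define $\tilde X_t = X_t(Q)$ and $\tilde Y_t = Y_t(Q)$, and form a composite auxiliary $\tilde S_t$ from $(S_t, Q, X_{1,t}^{p\setminus Q})$ so that $\frac{1}{p} I(X_{1,t}^p; S_t | S_{1,t-1}) \geq I(\tilde X_{1,t}; \tilde S_t | \tilde S_{1,t-1})$, while ${\bf E}\{d_t(\tilde X_t, \tilde Y_t)\} \leq D_t + \epsilon$ follows by linearity. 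The Markov chains $S_t \leftrightarrow (X_{1,t}, S_{1,t-1}) \leftrightarrow X_{t+1,n}$ drop out of the causality of the encoders combined with the temporal factorization of \eqref{source_correlated_1}. Passing $\epsilon \to 0$ and taking a limit in $p$ then places the tuple in ${\cal R}^{\IID}$.

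For achievability I would fix any $(R_{1,n}, D_{1,n})$ strictly inside ${\cal R}^{\IID}$ together with the associated conditionals $q(s_t | x_{1,t}, s_{1,t-1})$ and reconstruction maps $g_t$, and construct a sequential random code in the spatial block length $p$. At each time $t$ and for every realization $s_{1,t-1}^p$ of the past codewords, draw a codebook of $\lceil 2^{p(R_t + \delta)} \rceil$ independent codewords $s_t^p$ from the induced product marginal of $S_t^p$; the encoder declares any codeword jointly typical with $(x_{1,t}^p, s_{1,t-1}^p)$ and transmits its index via an instantaneous code, and the decoder sets $Y_t^p = g_t^{\otimes p}(s_{1,t}^p)$. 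A time-recursive application of the covering lemma together with a union bound over the $n$ time steps gives vanishing encoding failure probability, and the per-coordinate distortion meets $D_t + \epsilon$ by the typical-average lemma applied to the $\IID$ spatial components.

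The main obstacle I anticipate is the correct single-letterization in the converse. Unlike a classical block-coding setting, the temporal direction here is not a block to be amortized; it is a finite horizon of length $n$ carrying $n$ distinct rate constraints that must each be preserved individually. A naive time-sharing across $t$ would collapse the $R_t$'s together and destroy this per-time bookkeeping. The right move, borrowed from \cite{ma-ishwar:2011,yang:2011}, is to single-letterize only across the spatial index $Q$, producing a region that is single-letter in space but remains multi-letter in time. Once this split is in place, verifying the listed Markov chains for the constructed auxiliaries and passing $p \to \infty$ become routine, and the random-coding achievability needs only sequential, time-ordered codebook generation rather than any genuinely new ingredient.
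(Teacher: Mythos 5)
The paper does not prove Lemma~\ref{lemma:coding_theorem_iid}. It is quoted verbatim from Yang and Xiong, \cite[Theorem~5]{yang:2011}, and the only surrounding commentary is Remark~\ref{remark:lemma1}, which says that convexity/closedness of ${\cal R}^{\IID}$ can be obtained ``by trivially generalizing the time-sharing and continuity arguments of \cite[Appendix~C2]{ma-ishwar:2011}.'' So there is no in-paper argument for you to be compared against; you are, in effect, reconstructing the cited proof. With that caveat, your outline has the right shape: a converse from the prefix-code bound plus spatial single-letterization, and a direct part by sequential random covering; and you correctly identify the central structural point, namely that the averaging is over the $\IID$ spatial index $Q$ while the $n$ per-stage rate constraints must be preserved individually (time-sharing over $t$ would destroy them).

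Where your sketch is substantially looser than what the citation must actually carry out is precisely the step you label ``routine.'' You posit the composite auxiliary $\tilde S_t = (S_t, Q, X_{1,t}^{p\setminus Q})$ and then assert both the single-letterization inequality $\tfrac{1}{p}I(X_{1,t}^p; S_t\mid S_{1,t-1}) \geq I(\tilde X_{1,t};\tilde S_t\mid \tilde S_{1,t-1})$ and the Markov chains $\tilde S_t \leftrightarrow (\tilde X_{1,t},\tilde S_{1,t-1}) \leftrightarrow \tilde X_{t+1,n}$ without verification. Neither is automatic with your particular choice: conditioning on $S_{1,t-1}$ couples the spatial coordinates, so $X_{1,t}(Q)$ and $X_t^{p\setminus Q}$ are not conditionally independent for free, and the direction of the mutual-information inequality must be checked (the chain rule over spatial coordinates naturally produces a prefix conditioning $X_{1,t}(1{:}Q-1)$, not the complement $X_{1,t}^{p\setminus Q}$, so you need an additional argument for why the complement works, or you need to switch auxiliary). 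This is exactly the bookkeeping that \cite{ma-ishwar:2011,yang:2011} handle via convexity, continuity, and a limit in $p$ for the inner/outer bound match; collapsing it to a one-line coordinate trick is where your blind attempt understates the work. The achievability half, by contrast, is standard sequential covering and your description of it is adequate for a sketch.
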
 
\begin{remark}(Comments on Lemma \ref{lemma:coding_theorem_iid})\label{remark:lemma1}
In the characterization of Lemma \ref{lemma:coding_theorem_iid}, the spatial index is excluded because the rate and distortion regions are normalized with the total number of spatial components. This point is also shown in \cite[Theorem 4]{yang:2011}. Following \cite{ma-ishwar:2011} or \cite{yang:2011}, Lemma \ref{lemma:coding_theorem_iid} gives a set ${\cal R}^{\IID}$ that is convex and closed  (this can be shown by trivially generalizing the time-sharing and continuity arguments of \cite[Appendix C2]{ma-ishwar:2011} to $n$ time-steps). This in turn means that ${\cal R}^*={\cal R^{\IID}}$ (see, e.g., \cite[Theorem 5]{yang:2011}). Thus, \eqref{operational_min_sum_rate} can be reformulated to the following optimization problem:
\begin{align}
{\cal R}_{\ssum}^{\IID,\op}(D_{1,n})\triangleq\min_{(R_{1,n},D_{1,n})\in{\cal R}^{\IID}}\sum_{t=1}^nR_t.\label{operational_min_sum_rate_new}
\end{align}
\end{remark}
\par {In what follows, we state a lemma that gives a lower bound on ${\cal R}_{\ssum}^{\IID,\op}(D_{1,n})$. The lemma is  stated without a proof as it is already derived in various papers, e.g., \cite[Theorem 5.3.1, Lemma 5.4.1]{tatikonda:2000}, \cite[Theorem 4.1]{silva:2011}, \cite[Corollary 1.1]{ma-ishwar:2011} (for $n=3$-time steps but can be trivially generalized to an arbitrary number of time-steps).}  
{
\begin{lemma}(Lower bound on \eqref{operational_min_sum_rate_new})\label{lemma:total_rate}
For $p$ sufficiently large, the following lower bound holds: 
\begin{align}
{\cal R}_{{\ssum}}^{\IID,\op}(D_{1,n})&\geq{\cal R}_{{\ssum}}^{\IID}(D_{1,n})\nonumber\\
&\triangleq\min_{\substack{{\bf E}\left\{d_t(X_t,Y_t)\right\}\leq{D}_t,~t\in\mathbb{N}_1^n\\Y_1\leftrightarrow{X_{1}}\leftrightarrow{X}_{2,n},\\~Y_t\leftrightarrow(X_{1,t},Y_{1,t-1})\leftrightarrow{X}_{t+1,n},~t=2,\ldots,n-1}}I(X_{1,n};Y_{1,n}),\label{eq:sumrate}
\end{align}
where $I(X_{1,n};Y_{1,n})=\sum_{t=1}^nI(X_{1,t};Y_t|Y_{1,t-1})$ is a variant of directed information \cite{marko1973,massey:1990} obtained by the conditional independence constraints imposed in the constraint set of \eqref{eq:sumrate}. 
\end{lemma}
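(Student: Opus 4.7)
The plan is to start from the single-letter characterization of the achievable region in Lemma \ref{lemma:coding_theorem_iid} and show that, for any rate--distortion tuple $(R_{1,n},D_{1,n})\in{\cal R}^{\IID}$, the auxiliary reproductions $Y_{1,n}$ appearing in \eqref{single_letter_char_iid} constitute a feasible point of the minimization on the right-hand side of \eqref{eq:sumrate} whose objective is at most $\sum_{t=1}^{n}R_t$. Taking infima over achievable tuples and invoking ${\cal R}^{*}={\cal R}^{\IID}$ from Remark \ref{remark:lemma1} will then deliver the claimed bound.

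First, I would sum the three types of rate constraints in \eqref{single_letter_char_iid} and invoke the Markov chain $S_t\leftrightarrow(X_{1,t},S_{1,t-1})\leftrightarrow X_{t+1,n}$ (together with its $t=1$ analogue) to replace the first argument of each conditional mutual information by $X_{1,n}$ without changing its value, since the additional term $I(X_{t+1,n};S_t|X_{1,t},S_{1,t-1})$ vanishes. The chain rule then collapses the first $n-1$ contributions into $I(X_{1,n};S_{1,n-1})$, and combining with the terminal contribution gives
\begin{align*}
\sum_{t=1}^{n} R_t \;\geq\; I(X_{1,n};S_{1,n-1}) + I(X_{1,n};Y_n|S_{1,n-1}) \;=\; I(X_{1,n};S_{1,n-1},Y_n).
\end{align*}
Since $Y_{1,n-1}=\{g_t(S_{1,t})\}_{t=1}^{n-1}$ is a deterministic function of $S_{1,n-1}$, the data processing inequality yields $I(X_{1,n};S_{1,n-1},Y_n)\geq I(X_{1,n};Y_{1,n-1},Y_n)=I(X_{1,n};Y_{1,n})$.

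Next, I would verify that the Markov chain $Y_t\leftrightarrow(X_{1,t},Y_{1,t-1})\leftrightarrow X_{t+1,n}$ demanded by \eqref{eq:sumrate} is automatically satisfied by the $Y_{1,n}$ produced by the sequential code. The intended route is to establish the stronger statement $S_{1,t}\leftrightarrow X_{1,t}\leftrightarrow X_{t+1,n}$ by induction on $t$: the inductive step expands $I(S_{1,t};X_{t+1,n}|X_{1,t})$ via the chain rule into two non-negative summands that vanish, respectively, by the inductive hypothesis and by the Markov chain of \eqref{single_letter_char_iid}. Since $Y_{1,t}$ is a deterministic function of $S_{1,t}$, this conditional independence passes to $Y_{1,t}\leftrightarrow X_{1,t}\leftrightarrow X_{t+1,n}$, and one more application of the chain rule extracts $Y_t\leftrightarrow(X_{1,t},Y_{1,t-1})\leftrightarrow X_{t+1,n}$. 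This Markov property immediately justifies the decomposition $I(X_{1,n};Y_{1,n})=\sum_{t=1}^{n} I(X_{1,t};Y_t|Y_{1,t-1})$ stated in the lemma.

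Finally, the distortion constraints ${\bf E}\{d_t(X_t,Y_t)\}\leq D_t$ are inherited directly from \eqref{single_letter_char_iid}, so $Y_{1,n}$ is feasible for the minimization defining ${\cal R}_{\ssum}^{\IID}(D_{1,n})$; combined with the chain of inequalities above, this establishes $\sum_{t=1}^{n} R_t \geq {\cal R}_{\ssum}^{\IID}(D_{1,n})$ for every achievable tuple, and hence the lemma after infimization. The main technical obstacle I anticipate is precisely the passage from the $S$-level Markov chain in \eqref{single_letter_char_iid} to the $Y$-level one required by \eqref{eq:sumrate}: this is not a bare consequence of data processing, and the inductive telescoping argument above is the one step that genuinely uses the causal structure of the sequential code rather than purely formal mutual-information identities.
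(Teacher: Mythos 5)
Your proof is correct and takes the same route the paper implicitly relies on (and that appears in the cited references, as well as in a commented-out appendix section of the source): sum the rate inequalities from the single-letter region, enlarge $X_{1,t}$ to $X_{1,n}$ via the $S$-level Markov chains, collapse by the chain rule to $I(X_{1,n};S_{1,n-1},Y_n)$, pass to $Y_{1,n}$ by data processing since $Y_t=g_t(S_{1,t})$, and then check feasibility in \eqref{eq:sumrate}. The one place you go beyond what the paper sketches is the inductive verification that $S_{1,t}\leftrightarrow X_{1,t}\leftrightarrow X_{t+1,n}$ (and hence the $Y$-level chains) actually holds for the codewords of a sequential code; the paper simply asserts the conditional-independence constraints are satisfied. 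Your induction is sound: the chain-rule expansion of $I(S_{1,t};X_{t+1,n}\mid X_{1,t})$ splits into a term killed by the inductive hypothesis (via the sub-chain-rule decomposition of $I(S_{1,t-1};X_{t,n}\mid X_{1,t-1})=0$) and a term killed by the $t$-th Markov constraint of \eqref{single_letter_char_iid}, and the deterministic map $S_{1,t}\mapsto Y_{1,t}$ preserves the conditional independence, after which one more non-negativity argument extracts $Y_t\leftrightarrow(X_{1,t},Y_{1,t-1})\leftrightarrow X_{t+1,n}$ and, with it, the directed-information decomposition claimed in the lemma.
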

}
\par {We note that the lower bound in Lemma \ref{lemma:total_rate} is often encountered in the literature by the name nonanticipatory $\epsilon-$entropy and sequential or nonanticipative $\rdf$.
}{
\begin{remark}(When do we achieve the lower bound in \eqref{eq:sumrate}?)\label{remark:ideal_quantization}
It should be noted that in \cite[Theorem 4]{yang:2011} it was shown via an algorithmic approach (see also \cite[Theorem 5]{yang:2011} for an equivalent proof via a direct and converse coding theorem) that Lemma \ref{lemma:total_rate} is achieved with equality if the number of $\IID$ spatial components tends to infinity, i.e., $p\longrightarrow\infty$, which also means that the optimal minimizer or ``test-channel'' at each time instant in \eqref{eq:sumrate}, corresponds precisely to the distribution generated by a sequential encoder, i.e., $S_t=Y_t$, for any $t\in\mathbb{N}_1^n$ (see also the derivation of \cite[Corollary 1.1]{ma-ishwar:2011}). In other words, the equality holds if the encoder (or quantizer for continuous alphabet sources) simulates exactly the corresponding ``test-channel''  distribution of \eqref{eq:sumrate}. This claim was also demonstrated via an application example for jointly Gaussian $\rvs$ and per-time $\mse$ distortion in \cite[Corollary 1.2]{ma-ishwar:2011} and also stated as a corollary referring to an ``ideal'' $\dpcm$-based $\mse$ quantizer in \cite[Corollary 1.3]{ma-ishwar:2011}. In general, however, for any $p<\infty$, the equality in \eqref{eq:sumrate} is not achievable.
\end{remark}
Next, we state the generalization of Lemma \ref{lemma:total_rate} when the constrained set is subject to an average total distortion constraint defined as $\frac{1}{n}\sum_{t=1}^n{\bf E}\left\{d_t(X_t,Y_t)\right\}\leq{D}$ with ${\bf E}\left\{d_t(X_t,Y_t)\right\}$ given in \eqref{average_distortion}.  This lemma was derived in \cite[Theorem 5.3.1, Lemma 5.4.1]{tatikonda:2000}. 
\begin{lemma}(Generalization of Lemma \ref{lemma:total_rate})\label{lemma:total_rate_aver_dist}
For $p$ sufficiently large, the following lower bound holds:
\begin{align}
\begin{split}
{\cal R}_{{\ssum}}^{\IID,\op}(D)&\geq{\cal R}_{{\ssum}}^{\IID}(D)\\
&=\min_{\substack{\frac{1}{n}\sum_{t=1}^n{\bf E}\left\{d_t(X_t,Y_t)\right\}\leq{D},~t\in\mathbb{N}_1^n\\Y_1\leftrightarrow(X_{1})\leftrightarrow{X}_{2,n},\\~Y_t\leftrightarrow(X_{1,t},Y_{1,t-1})\leftrightarrow{X}_{t+1,n},~t\in\mathbb{N}_2^{n-1}}}I(X_{1,n};Y_{1,n}),
\end{split}\label{eq:sumrate_aver_dist}
\end{align}
\end{lemma}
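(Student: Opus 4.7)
The plan is to reduce Lemma \ref{lemma:total_rate_aver_dist} to Lemma \ref{lemma:total_rate} by nesting the minimization over per-time distortion tuples inside the average distortion constraint, and then collapsing the resulting double minimization.

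First, I would observe that any sequential causal code satisfying $\frac{1}{n}\sum_{t=1}^n{\bf E}\{d_t(X_t^p,Y_t^p)\}\leq{D}+\epsilon$ induces, via $\hat{D}_t\triangleq{\bf E}\{d_t(X_t^p,Y_t^p)\}$, a per-time distortion tuple $\hat{D}_{1,n}$ whose empirical average is at most $D+\epsilon$; conversely, every per-time tuple $D_{1,n}$ with $\frac{1}{n}\sum_{t=1}^n D_t\leq D$ is feasible under the average constraint. Hence the operational achievable region of Definition \ref{def:seq:operation_meaning} decomposes as
\[
{\cal R}^{*}(D)=\bigcup_{\frac{1}{n}\sum_{t=1}^n D_t\leq{D}}{\cal R}^{*}(D_{1,n}),
\]
so the operational minimum total-rate admits the representation
\[
{\cal R}_{\ssum}^{\IID,\op}(D)=\inf_{\frac{1}{n}\sum_{t=1}^n D_t\leq{D}}{\cal R}_{\ssum}^{\IID,\op}(D_{1,n}).
\]
Applying Lemma \ref{lemma:total_rate} to each tuple inside this infimum (which is valid for $p$ sufficiently large) immediately gives
\[
{\cal R}_{\ssum}^{\IID,\op}(D)\geq\inf_{\frac{1}{n}\sum_{t=1}^n D_t\leq{D}}{\cal R}_{\ssum}^{\IID}(D_{1,n}).
\]

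Second, I would collapse this double minimization into the single minimization appearing on the right-hand side of \eqref{eq:sumrate_aver_dist}. The inner minimum defining ${\cal R}_{\ssum}^{\IID}(D_{1,n})$ ranges over all joint test-channel distributions satisfying the sequential Markov chain constraints together with ${\bf E}\{d_t(X_t,Y_t)\}\leq{D}_t$ for each $t$. Taking the outer infimum over $D_{1,n}$ with $\frac{1}{n}\sum_{t=1}^n D_t\leq{D}$ simply enlarges the feasible set to all test channels whose induced per-time distortions jointly satisfy the relaxed average constraint $\frac{1}{n}\sum_{t=1}^n{\bf E}\{d_t(X_t,Y_t)\}\leq{D}$. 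Since the objective $I(X_{1,n};Y_{1,n})$ is unaffected by the auxiliary tuple $D_{1,n}$, the two minimizations combine into a single minimization over test channels subject to the average distortion constraint, which is precisely the right-hand side of \eqref{eq:sumrate_aver_dist}.

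The main (but relatively minor) obstacle is a rigorous justification of the interchange of infima and of the attainment of the outer minimum. This requires convexity and lower semi-continuity of the map $D_{1,n}\mapsto{\cal R}_{\ssum}^{\IID}(D_{1,n})$ on the feasible simplex, both of which follow from the standard time-sharing and continuity arguments already invoked in Remark \ref{remark:lemma1} (by generalizing \cite[Appendix C2]{ma-ishwar:2011} from three to $n$ time instants). These properties ensure the infimum is attained on the compact constraint set, justifying the ``$\min$'' in the statement. A careful handling of the $\epsilon$-slack from Definition \ref{def:seq:operation_meaning}, combined with lower semi-continuity of directed information in the test channel, then produces the clean inequality asserted in the lemma.
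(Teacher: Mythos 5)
The paper itself offers no proof of this lemma---it is stated and attributed to Tatikonda's thesis \cite[Theorem 5.3.1, Lemma 5.4.1]{tatikonda:2000}---so there is no ``paper's proof'' to compare against line by line. That said, your argument is a valid and fairly clean reduction of Lemma~\ref{lemma:total_rate_aver_dist} to Lemma~\ref{lemma:total_rate}, whereas the cited proof (and the standard treatment in the literature) establishes the average-distortion bound directly by the usual converse chain: lower-bound $\sum_t R_t$ by entropy, telescope the conditional mutual informations to get $I(X_{1,n};S_{1,n})$, apply the data-processing and Markov-chain constraints to pass to $I(X_{1,n};Y_{1,n})$, and then minimize over test channels satisfying the average constraint. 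Your route replaces that single converse with two steps: (i) write $\mathcal{R}_{\ssum}^{\IID,\op}(D)=\inf_{\frac{1}{n}\sum_t D_t\le D}\mathcal{R}_{\ssum}^{\IID,\op}(D_{1,n})$, invoke Lemma~\ref{lemma:total_rate} pointwise, and (ii) collapse the nested infima by noting that the induced feasibility sets coincide. Both directions of the collapsing argument are sound: any test channel feasible for the right-hand side of \eqref{eq:sumrate_aver_dist} is feasible for the inner problem with $D_t=\mathbf{E}\{d_t(X_t,Y_t)\}$, and conversely any test channel feasible for some admissible tuple $D_{1,n}$ automatically satisfies the relaxed average constraint. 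What the direct converse buys is self-containedness and avoidance of the attainment/continuity technicalities you rightly flag; what your reduction buys is modularity, since it makes explicit that the average-distortion bound is a corollary of the per-time bound plus a simple epigraph-style decomposition, and it recycles the convexity/continuity facts already asserted in Remark~\ref{remark:lemma1} rather than re-deriving them.

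One small point worth tightening: your identity $\mathcal{R}^{*}(D)=\bigcup_{\frac{1}{n}\sum_t D_t\le D}\mathcal{R}^{*}(D_{1,n})$ implicitly defines an operational region for the average constraint that the paper never formalizes (Definition~\ref{def:seq:operation_meaning} is stated only for per-time tuples). You should state this definition explicitly and verify that the $\epsilon$-slack in \eqref{seq:achievability} transfers correctly---i.e., a code with $\frac{1}{n}\sum_t\mathbf{E}\{d_t\}\le D+\epsilon$ yields a tuple $\hat{D}_{1,n}$ with $\frac{1}{n}\sum_t\hat{D}_t\le D+\epsilon$, so the infimum over tuples is over the $\epsilon$-inflated simplex, and continuity of $D_{1,n}\mapsto\mathcal{R}_{\ssum}^{\IID}(D_{1,n})$ then lets $\epsilon\downarrow 0$. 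With that made precise, the argument is complete.
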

Clearly, one can use the same methodology applied in \cite[Theorems 4, 5]{yang:2011} to demonstrate that the lower bound in \eqref{eq:sumrate_aver_dist} is achieved once $p\longrightarrow\infty$ (see the discussion in Remark \ref{remark:ideal_quantization}). However, we once again point out that in general, \eqref{eq:sumrate_aver_dist} is a lower bound on the minimum achievable rates achieved by causal sequential codes.}

\paragraph*{{Information structures}} {Next, we state a few well-known structural results related to the bounds in Lemmas \ref{lemma:total_rate}, \ref{lemma:total_rate_aver_dist}. In particular, if the temporally correlated joint $\pmf$ in \eqref{source_correlated_1} follows a finite-order Markov process, then, the description of the rate-distortion region in Lemma \ref{lemma:coding_theorem_iid}, and the corresponding bounds on the minimum achievable total-rate in Lemmas \ref{lemma:total_rate}, \ref{lemma:total_rate_aver_dist} can be simplified considerably following for instance the framework of \cite{witsenhausen:1979,borkar:2001,stavrou:2018siam}. For the important special case of first-order Markov process, \eqref{single_letter_char_iid} simplifies to
\begin{align}
{\cal R}^{\IID,1}&=\Bigg\{(R_{1,n},D_{1,n})\Bigg{|}\exists S_{1,n-1}, Y_{1,n},~\{g_t(\cdot)\}_{t=1}^n,\nonumber\\
s.t.&~{R}_1\geq{I}(X_1;S_1),~\mbox{(initial time)}\nonumber\\
&~R_t\geq{I}(X_{t};S_t|S_{1,t-1}),~t=2,\ldots,n-1,\nonumber\\
& ~R_n\geq{I}(X_{n};Y_n|S_{1,n-1}),~\mbox{(terminal time)},\nonumber\\
& ~D_t\geq{\bf E}\left\{d_t(X_t,Y_t)\right\},~t\in\mathbb{N}_1^n,\nonumber\\
& ~Y_1=g_1(S_1),~Y_{t}=g_t(S_{1,t}),~t=2,\ldots,n-1,\nonumber\\
& ~S_1\leftrightarrow(X_{1})\leftrightarrow{X}_{2,n},\nonumber\\
& ~S_t\leftrightarrow(X_{t},S_{1,t-1})\leftrightarrow(X_{1,t-1},{X}_{t+1,n})\Bigg\}.
\label{single_letter_char_iid_markov}
\end{align}
Using \eqref{single_letter_char_iid_markov}, the minimum achievable total-rate can now be simplified to the following optimization problem:
\begin{align}
{\cal R}_{\ssum}^{\IID,\op,1}(D_{1,n})\triangleq\min_{(R_{1,n},D_{1,n})\in{\cal R}^{\IID,1}}\sum_{t=1}^nR_t.\label{operational_min_sum_rate_new_markov}
\end{align}
Using the description  of  \eqref{operational_min_sum_rate_new_markov}, we can simplify \eqref{eq:sumrate} and \eqref{eq:sumrate_aver_dist}, respectively,  as follows:
\begin{align}
&{\cal R}_{\ssum}^{\IID,\op,1}(D_{1,n})\geq{\cal R}_{{\ssum}}^{\IID,1}(D_{1,n})=\min_{\substack{{\bf E}\left\{d_t(X_t,Y_t)\right\}\leq{D}_t,~t\in\mathbb{N}_1^n\\Y_1\leftrightarrow{X_{1}}\leftrightarrow{X}_{2,n},\\~Y_t\leftrightarrow(X_{t},Y_{1,t-1})\leftrightarrow(X_{1,t-1},{X}_{t+1,n}),~t\in\mathbb{N}_2^{n-1}}}I(X_{1,n};Y_{1,n}),\label{eq:sumrate1}\\
&{\cal R}_{{\ssum}}^{\IID,\op,1}(D)\geq{\cal R}_{{\ssum}}^{\IID,1}(D)\triangleq\min_{\substack{\frac{1}{n}\sum_{t=1}^n{\bf E}\left\{d_t(X_t,Y_t)\right\}\leq{D},~t\in\mathbb{N}_1^n\\Y_1\leftrightarrow{X_{1}}\leftrightarrow{X}_{2,n},\\~Y_t\leftrightarrow(X_{t},Y_{1,t-1})\leftrightarrow(X_{1,t-1},~{X}_{t+1,n}),~t=2,\ldots,n-1}}I(X_{1,n};Y_{1,n}),\label{eq:sumrate2}
\end{align}
where $I(X_{1,n};Y_{1,n})=\sum_{t=1}^n{I}(X_t;Y_t|Y_{1,t-1})$.}
\par {In the sequel, we use the description of \eqref{eq:sumrate2} to derive our main results.}
%
%
%
%
\section{Application in Quantized State Estimation}\label{sec:examples}

\par In this section, {we apply the sequential coding framework of the previous section to a state estimation problem and obtain new results in such applications.}
\par We consider a similar scenario to \cite[\S{II}]{khina:2018} where a multi-track system  estimates several ``parallel'' Gaussian processes over a single shared communication link as illustrated in Fig. \ref{fig:multitrack}. Following the sequential coding framework, we require the Gaussian source processes to have temporally correlated and spatially $\IID$ components, which are observed by an observer who collects the measured states into a single vector state. Then, the observer/encoder maps the states as random finite-rate packets to a $\mmse$ estimator through a noiseless link. {Compared to the result of \cite[Theorem 1]{khina:2018} which derives a dynamic forward in time recursion of a distortion-rate allocation algorithm when the rate is given at each time instant, here we derive a dynamic rate-distortion reverse-waterfilling algorithm operating forward in time for which we only consider a given distortion threshold $D>0$.}  
\begin{figure}[htp]
\centering
\includegraphics[width=\textwidth]{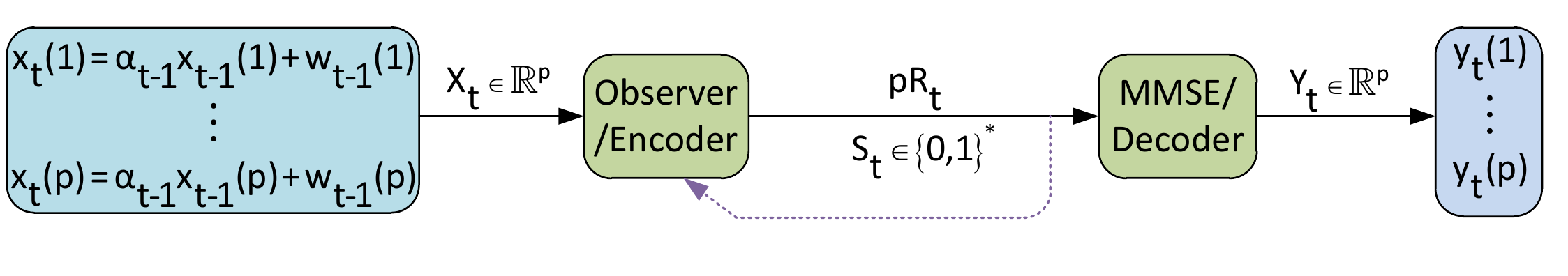}
\caption{Multi-track state estimation system model.} 
\label{fig:multitrack}
\end{figure}
\par First, we describe the problem of interest.\\
{\bf State process.} Consider $p$-parallel time-varying Gauss-Markov processes with $\IID$ spatial components as follows:
\begin{align}
x_{t}(i)=\alpha_{t-1}x_{t-1}(i)+w_{t-1}(i),~i\in\mathbb{N}_1^p,~t\in\mathbb{N}_1^n,\label{example:GM_process}
\end{align}
where $x_1(i)\equiv{x}_1$ is given, with $x_1\sim{\cal N}(0;\sigma^2_{x_1})$; the non-random coefficient $\alpha_{t}\in\mathbb{R}$ is known at each time step $t$, and $\{w_t(i)\equiv{w}_t:~i\in\mathbb{N}_1^p\}$, $w_t\sim{\cal N}(0;\sigma^2_{w_{t}})$, is an independent Gaussian noise process at each $t$, independent of $x_1, \forall{i}\in\mathbb{N}_1^p$. Since \eqref{example:GM_process} has $\IID$ spatial components it can be compactly written as a vector or frame as follows:
\begin{align}
X_{t}=A_{t-1}X_{t-1}+W_{t-1},~X_1=\text{given},~t\in\mathbb{N}_2^n,\label{example:GM_process1}
\end{align}
where $A_{t-1}=\diag(\alpha_{t-1},\ldots,\alpha_{t-1})\in\mathbb{R}^{p\times{p}}$, $X_t\in\mathbb{R}^p$, and the independent Gaussian noise process $W_t\in\mathbb{R}^p\sim{\cal N}(0;\Sigma_{W_t})$, where $\Sigma_{W_t}=\diag(\sigma^2_{w_t},\ldots,\sigma^2_{w_t})\succ{0}\in\mathbb{R}^{p\times{p}}$ independent of the initial state $X_1$.\\ 
{\bf Observer/Encoder.} At the observer the spatially $\IID$ time-varying $\mathbb{R}^p$-valued Gauss-Markov processes are collected into  a frame $X_t\in\mathbb{R}^p$ and mapped using sequential coding with encoded sequence:
\begin{align}
S_t=f_t(X_{1,t},S_{1,t-1}),\label{example:encoding}
\end{align}
where at $t=1$ we assume $S_1=f_1(X_1)$, and $R_t=\frac{{\bf E}|S_t|}{p}$ is the expected (random) rate (per dimension) at each time instant $t$ transmitted through the noiseless link.\\
{\bf \text{MMSE} Decoder.} The data packet $S_t$ is received using the following reconstructed sequence:
\begin{align}
Y_t=g_t(S_{1,t}),\label{example:decoding}
\end{align}
where at $t=1$ we have $Y_1=g_1(S_1)$.\\
{\bf Distortion.} We consider the average total $\mse$ distortion normalized over all spatial components as follows:
\begin{align}
\frac{1}{n}\sum_{t=1}^nD_t~\mbox{with}~D_t\triangleq\frac{1}{p}{\bf E}\left\{||X_t-Y_t||_2^2\right\}.\label{example:aver_dist}
\end{align} 
\noindent{\bf Performance.} The performance of the above system {(per dimension)} for a given $D>0$ can be cast to the following optimization problem:
\begin{align}
{\cal R}_{\ssum}^{\IID,\op,1}(D)=\min_{\substack{(f_t,~g_t):~t=1,\ldots,n\\ \frac{1}{n}\sum_{t=1}^nD_t\leq{D}}}\sum_{t=1}^n{R}_t.\label{example:problem_form}
\end{align}
\par  {The next theorem is our first main result in this paper. It derives a lower bound on the performance of Fig. \ref{fig:multitrack} by means of a dynamic reverse-waterfilling algorithm.} 
\begin{theorem}(Lower bound on \eqref{example:problem_form})\label{theorem:fundam_lim_mmse}
For the multi-track system in Fig. \ref{fig:multitrack}, the minimum achievable total-rate for any ``$n$'' and any $p$, however large, is ${\cal R}_{\ssum}^{\IID,\op,1}(D)=\sum_{t=1}^n{R^{\op}_t}$ with the minimum achievable rate distortion at each time instant (per dimension) given by some $R^{\op}_t\geq{R}^*_t$ such that
\begin{align}
R_t^*=\frac{1}{2}\log_2\left(\frac{\lambda_t}{D_{t}}\right), \label{para_sol_eq.1}
\end{align}
where $\lambda_t\triangleq\alpha^2_{t-1}D_{t-1}+\sigma^2_{w_{t-1}}$ and $D_t$ is the distortion at each time instant evaluated based on a dynamic reverse-waterfilling algorithm operating forward in time. The algorithm is as follows:
\begin{align}
D_{t} &\triangleq\left\{ \begin{array}{ll} \xi_t  & \mbox{if} \quad \xi_t\leq\lambda_{t} \\
\lambda_{t} &  \mbox{if}\quad\xi_t>\lambda_{t} \end{array} \right.,~\forall{t},\label{rev_water:eq.1}
\end{align}
with $\sum_{t=1}^n D_t= nD$, and 
\begin{align}\label{values_of_xis}
\xi_t=
\begin{cases}
\frac{1}{2b^2_t}\left(\sqrt{1+\frac{2b^2_t}{\theta}}-1\right),~\forall t\in\mathbb{N}_1^{n-1} \\
\frac{1}{2\theta},~t=n
\end{cases},
\end{align}
where $\theta>0$ is the Lagrangian multiplier tuned to obtain equality $\sum_{t=1}^n D_t= nD$, $b^2_t\triangleq\frac{\alpha^2_t}{\sigma^2_{w_t}}$, and  $D\in(0, \infty)$.
\end{theorem}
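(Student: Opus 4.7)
The plan is to start from Lemma~\ref{lemma:total_rate_aver_dist} specialized to the Markov case in \eqref{eq:sumrate2}, which provides a lower bound on ${\cal R}_{\ssum}^{\IID,\op,1}(D)$ through the nonanticipative/sequential $\rdf$. Because $X_t$ in \eqref{example:GM_process1} is first-order Markov, $A_{t-1}=\alpha_{t-1}I_p$ and $\Sigma_{W_t}=\sigma_{w_t}^2 I_p$, the system matrices commute by pairs and are simultaneously diagonalizable by any orthogonal matrix. Combined with the spatial $\IID$ assumption, this lets us argue (after normalizing by $p$) that the minimization effectively decouples across the $p$ spatial components and reduces to a scalar-valued Gauss--Markov rate-distortion problem. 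This is the step whose careful justification I expect to be the main obstacle, since one must verify that the optimal test channel for the vector problem inherits the diagonal-plus-Markov structure, so that the per-dimension rate coincides with the scalar problem for any finite $p$.

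Next I would exploit the jointly Gaussian structure and the MSE fidelity to replace each conditional mutual information by its closed-form expression via the optimal Gaussian test channel. Standard arguments for Gauss--Markov sources (see the discussion following Lemma~\ref{lemma:total_rate_aver_dist}) give
\begin{align*}
I(X_t;Y_t\,|\,Y_{1,t-1})=\tfrac12\log_2\!\left(\tfrac{\lambda_t}{D_t}\right),\qquad \lambda_t\triangleq\alpha_{t-1}^2 D_{t-1}+\sigma_{w_{t-1}}^2,
\end{align*}
where $\lambda_t$ is the MMSE prediction variance of $X_t$ from $Y_{1,t-1}$, obtained recursively through the Kalman-type update driven by the reproduction errors $\{D_s\}_{s<t}$. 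Additionally, $D_t\le\lambda_t$ is imposed, since setting $D_t>\lambda_t$ brings no gain and produces zero instantaneous rate. Summing over $t$ yields $R^*_t=\tfrac12\log_2(\lambda_t/D_t)$ and the total-rate objective as a function of $\{D_t\}$.

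The resulting finite-dimensional optimization is convex (log of a linear function is concave; $-\log D_t$ is convex) and takes the form
\begin{align*}
\min_{\{D_t\}}\ \sum_{t=1}^n \tfrac12\log_2\!\left(\tfrac{\lambda_t(D_{t-1})}{D_t}\right)\quad\text{s.t.}\quad \tfrac{1}{n}\sum_{t=1}^n D_t\le D,\ \ D_t\le\lambda_t,
\end{align*}
with the temporal coupling through $\lambda_{t+1}=\alpha_t^2 D_t+\sigma_{w_t}^2$. I would then apply the KKT conditions with Lagrange multiplier $\theta>0$ on the average distortion and handle the inequality $D_t\le\lambda_t$ by complementary slackness. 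At the terminal time $t=n$, $D_n$ appears only through $-\tfrac12\log D_n$, so stationarity immediately gives $D_n=\tfrac{1}{2\theta}$, matching $\xi_n$ in \eqref{values_of_xis}. For $t<n$, $D_t$ appears both as the denominator in the $t$-th log and inside $\lambda_{t+1}$ of the next log, so differentiating yields
\begin{align*}
-\tfrac{1}{D_t}+\tfrac{\alpha_t^2}{\alpha_t^2 D_t+\sigma_{w_t}^2}+2\theta=0,
\end{align*}
which, after clearing denominators and dividing by $\sigma_{w_t}^2$, becomes the quadratic $2\theta\, b_t^2 D_t^2+2\theta D_t-1=0$ with $b_t^2=\alpha_t^2/\sigma_{w_t}^2$. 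Taking the positive root produces the expression for $\xi_t$ in \eqref{values_of_xis}.

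Finally, the two-branch reverse-waterfilling rule \eqref{rev_water:eq.1} follows from complementary slackness on $D_t\le\lambda_t$: whenever the unconstrained stationary value $\xi_t$ respects $\xi_t\le\lambda_t$ the constraint is inactive and $D_t=\xi_t$; otherwise the constraint binds and $D_t=\lambda_t$, yielding $R_t^*=0$ for that step. The multiplier $\theta$ is then tuned so that $\sum_{t=1}^n D_t=nD$, making the algorithm run forward in time since each $\lambda_t$ depends on the previously allocated $D_{t-1}$. This forward dynamic is what allows Algorithm~\ref{algo1} to implement the scheme recursively.
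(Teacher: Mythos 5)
Your proposal is correct and follows essentially the same path as the paper's proof in Appendix~\ref{proof:theorem:fundam_lim_mmse}: both start from eq.~\eqref{eq:sumrate2}, exploit the scalar-matrix (parallel, spatially IID) structure of $A_t$ and $\Sigma_{W_t}$ to collapse the vector sequential RDF into a scalar Gauss--Markov one, and then solve the resulting distortion-allocation problem by a forward reverse-waterfilling recursion. The only substantive difference is that you carry out the KKT stationarity computation explicitly (obtaining the quadratic $2\theta b_t^2 D_t^2 + 2\theta D_t - 1 = 0$ and the terminal condition $D_n = 1/(2\theta)$), whereas the paper delegates that final optimization step to a citation of \cite[Theorem~2]{stavrou:2018lcss}.
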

\begin{IEEEproof} 
See Appendix \ref{proof:theorem:fundam_lim_mmse}.
\end{IEEEproof}

{In the next remark, we discuss some technical observations regarding Theorem \ref{theorem:fundam_lim_mmse} and draw connections with \cite[Corollary 1.2]{ma-ishwar:2011}.}
{\begin{remark}\label{remark:discussion:theorem:mmse}
{\bf (1)} The optimization problem in the derivation of Theorem \ref{theorem:fundam_lim_mmse} suggests that $(A_t, \Sigma_{W_t}, \Delta_t, \Lambda_t)$ commute by pairs\cite[p. 5]{harville:1997} since they are all scalar matrices which in turn means that they are simultaneously diagonalizable by an orthogonal matrix \cite[Theorem 21.13.1]{harville:1997} (in this case the orthogonal matrix is the identity matrix hence it is omitted from the characterization of the optimization problem). \\
{\bf (2)} Theorem \ref{theorem:fundam_lim_mmse} extends the result of \cite[Corollary 1.2]{ma-ishwar:2011} who found an explicit expression of the minimum total-rate $\sum_{t=1}^nR_t^*$ for $n=3$ subject to a per-time $\mse$ distortion, to a similar problem constrained by an average total-distortion that we solve using a dynamic reverse-waterfilling algorithm that allocates the rate and the distortion at each instant of time for a fixed finite time horizon. 
\end{remark}
}

{\it Implementation of the dynamic reverse-waterfilling:} It should be remarked that a way to implement the reverse-waterfilling algorithm in Theorem \ref{theorem:fundam_lim_mmse} is proposed in \cite[Algorithm 1]{stavrou:2018lcss}. A different algorithm using the {\it bisection method }(for details see, e.g., \cite[Chapter 2.1]{atkinson:1991}) is proposed in Algorithm \ref{algo1}. {The method in Algorithm \ref{algo1} guarantees linear convergence with rate $\frac{1}{2}$. On the other hand, \cite[Algorithm 1]{stavrou:2018lcss} requires a specific proportionality gain factor $\gamma\in(0,1]$ chosen appropriately at each time instant. The choice of $\gamma$ affects the rate of convergence whereas it does not guarantee global convergence of the algorithm}. In Fig. \ref{fig:rate-distortion:allocation}, we illustrate a numerical simulation using Algorithm \ref{algo1} by taking $a_t\in(0,2)$, $\sigma^2_{w_t}=1,$~for $t=\{1,2,\ldots,200\}$ and $D=1$. 

\begin{figure}[htp]
\centering
\includegraphics[width=\columnwidth]{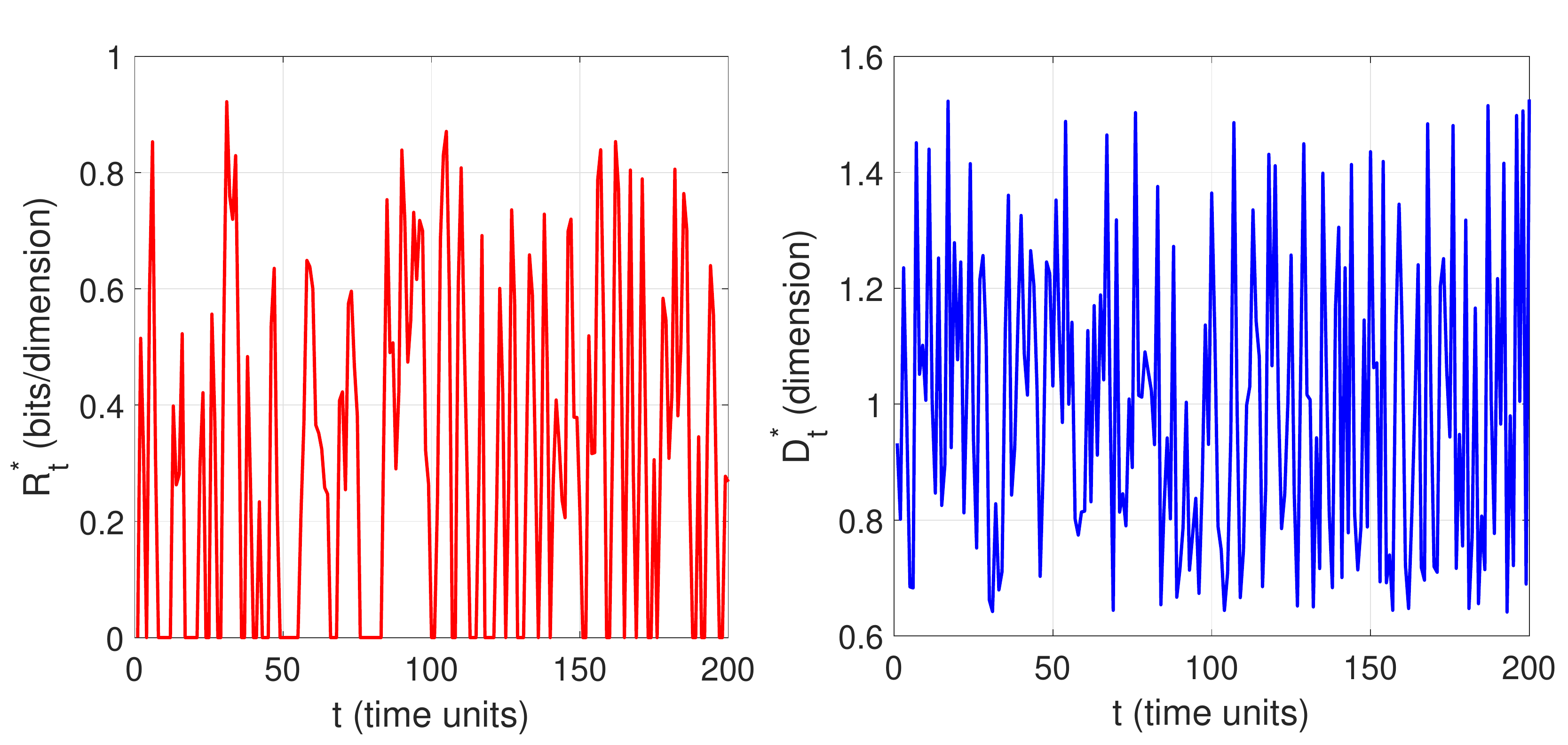}
\caption{Dynamic rate-distortion allocation for a time-horizon $t=\{1,2\ldots,200\}$ for the system in Fig. \ref{fig:multitrack}.} 
\label{fig:rate-distortion:allocation}
\end{figure}
\begin{varalgorithm}{1}
\caption{Dynamic reverse-waterfilling algorithm}
\begin{algorithmic}
\STATE {\textbf{Initialize:} number of time-steps $n$; distortion level $D$; error tolerance $\epsilon$; nominal minimum and maximum value $\theta^{\min}={0}$ and $\theta^{\max}={\frac{1}{2D}}$; initial variance ${\lambda}_1=\sigma^2_{x_{1}}$ of the initial state $x_1$, values $a_t$ and $\sigma^2_{w_t}$ of  \eqref{example:GM_process}.} 
\STATE {Set $\theta=1/2D$; $\text{flag}=0$.}
\WHILE{$\text{flag}=0$}
\STATE {Compute $D_{t}~\forall~t$ as follows:}
\FOR {$t=1:n$}
\STATE {Compute $\xi_t$ according to \eqref{values_of_xis}.}
\STATE {Compute $D_t$ according to \eqref{rev_water:eq.1}.}
\IF {$t<n$}
\STATE{Compute $\lambda_{t+1}$ according to $\lambda_{t+1}\triangleq\alpha^2_{t}D_{t}+\sigma^2_{w_{t}}$.}
\ENDIF
\ENDFOR
\IF {$\frac{1}{n}\sum{D}_t-D\geq\epsilon$}
\STATE {Set $\theta^{\min}=\frac{\theta}{n}$.}
\ELSE
\STATE {Set $\theta^{\max}=\frac{\theta}{n}$.}
\ENDIF
\IF {$\theta^{\max}-\theta^{\min}\geq\frac{\epsilon}{n}$}
\STATE{Compute $\theta=\frac{n(\theta^{\min}+\theta^{\max})}{2}$.}
\ELSE
\STATE {$\text{flag}\leftarrow 1$}
\ENDIF
\ENDWHILE
\STATE {\textbf{Output:} $\{D_t:~t\in\mathbb{N}_1^n\}$, $\{\lambda_t:~t\in\mathbb{N}_1^n\}$, for a given distortion level $D$.}
\end{algorithmic}
\label{algo1}
\end{varalgorithm}

{
\subsection{Steady-state solution of Theorem \ref{theorem:fundam_lim_mmse}}\label{subsec:steady-state_solution}}

{In this subsection, we study the steady-state case of the lower bound obtained in Theorem \ref{theorem:fundam_lim_mmse}. To do this, first, we restrict the state process of our setup to be time invariant, which means that in \eqref{example:GM_process} the coefficients $\alpha_{t-1}\equiv{\alpha},~\forall{t}$ and $w_t\sim{\cal N}(0;\sigma^2_{w}),~\forall{t}$, or similarly, in \eqref{example:GM_process1} the matrix $A_{t-1}\equiv{A}=\diag(\alpha,\ldots,\alpha),~\forall{t}$ and $W_t\sim{\cal N}(0;\Sigma_W),~\forall{t}$, where $\Sigma_W=\diag(\sigma^2_{w},\ldots,\sigma^2_{w})\succ{0}$.
We also denote the steady-state average total rate and distortion as follows:
\begin{align}
R_{\infty}=\limsup_{n\longrightarrow\infty}\frac{1}{n}\sum_{t=1}^nR_t,~~~~D_{\infty}=\limsup_{n\longrightarrow\infty}\frac{1}{n}\sum_{t=1}^nD_t.\label{steady_state_rate_distortion}
\end{align}
\noindent{\bf Steady-state Performance.} The minimum achievable steady-state performance of the multi-track system of Fig. \ref{fig:multitrack} when the system is modeled by $p$-parallel time-invariant Gauss-Markov processes {(per dimension)} can be cast to the following optimization problem:
\begin{align}
{\cal R}_{\ssum,ss}^{\IID,\op,1}(D)=\min_{\substack{(f_t,~g_t):~t=1,\ldots,\infty\\ D_{\infty}\leq{D}}}R_{\infty}.\label{performance_steady_state}
\end{align}
The next corollary is a consequence of the lower bound derived in Theorem \ref{theorem:fundam_lim_mmse}. It states that the minimum achievable steady state total rate subject to steady-state total distortion constraint is equivalent to having the minimum achievable steady state total rate subject to a fixed distortion budget, i.e., $D_t=D,~\forall{t}$.  This result complements  equivalent results derived in \cite{derpich:2012}, \cite[Corollary 2]{khina:2018}.}
{
\begin{corollary}(Lower bound on \eqref{performance_steady_state})\label{corollary:steady_state_rev_water}
The minimum achievable steady state performance of \eqref{performance_steady_state}, under a steady-state total distortion constraint $D_{\infty}\leq{D}$ for any $p$ however larger, is bounded from below by ${\cal R}_{\ssum,ss}^{\IID,\op,1}(D)\geq{R}^*_{\infty}$, such that
\begin{align}
R^*_{\infty}=\frac{1}{2}\log_2\left(\alpha^2+\frac{\sigma^2_w}{D}\right),\label{steady_state_rdf}
\end{align}
where $R^*_{\infty}\triangleq\lim_{n\longrightarrow\infty}\frac{1}{n}\sum_{t=1}^nR_t^*$. Consequently, assuming $D_t=D,~\forall{t}$, achieves \eqref{steady_state_rdf} as $n\longrightarrow\infty$.
\end{corollary}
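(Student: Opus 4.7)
\medskip
\noindent\textbf{Proof proposal.} The plan is to specialize the finite-horizon reverse-waterfilling of Theorem \ref{theorem:fundam_lim_mmse} to the time-invariant case and then pass to the Ces\`aro limit, showing that the boundary term (at $t=n$) contributes negligibly as $n\to\infty$. Under time-invariance we have $\alpha_{t-1}\equiv\alpha$ and $\sigma^2_{w_{t-1}}\equiv\sigma^2_w$, so the constants $b^2_t=\alpha^2/\sigma^2_w\equiv b^2$ in \eqref{values_of_xis} become independent of $t$ for all $t<n$. Hence there is a single value $\xi$ with
\[
\xi \;=\; \frac{1}{2b^2}\!\left(\sqrt{1+\tfrac{2b^2}{\theta}}-1\right),
\]
while $\xi_n=1/(2\theta)$ is (in general) a different, but bounded, boundary value depending on the same Lagrange multiplier $\theta>0$.

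\medskip
\noindent Next I would verify that the constant allocation $D_t=\xi$ for all $t\le n-1$ is consistent with the waterfilling rule \eqref{rev_water:eq.1}. Setting $D_{t-1}=\xi$ in $\lambda_t=\alpha^2 D_{t-1}+\sigma^2_w$ gives the constant value $\lambda=\alpha^2\xi+\sigma^2_w$; since $\lambda-\xi=(\alpha^2-1)\xi+\sigma^2_w$ and (as one checks from the formula for $\xi$) the inequality $\xi\le\lambda$ holds whenever $\sigma^2_w>0$, the ``min'' branch of \eqref{rev_water:eq.1} is selected and $D_t=\xi$ is self-consistent. Then $R_t^*=\tfrac{1}{2}\log_2(\lambda/\xi)=\tfrac{1}{2}\log_2(\alpha^2+\sigma^2_w/\xi)$ is also constant for $t\le n-1$, while $R_n^*$ remains bounded because $\xi_n=1/(2\theta)$ is a bounded quantity.

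\medskip
\noindent Now I would take the limit $n\to\infty$. The average total distortion constraint becomes
\[
\frac{1}{n}\sum_{t=1}^n D_t \;=\; \frac{(n-1)\xi + D_n}{n}\;\xrightarrow[n\to\infty]{}\;\xi,
\]
so the constraint $D_\infty\le D$ forces, in the limit, $\xi=D$ (equivalently, the Lagrange multiplier $\theta$ satisfies $\sqrt{1+2b^2/\theta}=1+2b^2 D$). Similarly, the Ces\`aro average of the rate obeys
\[
\frac{1}{n}\sum_{t=1}^n R_t^* \;=\; \frac{(n-1)R^*+R_n^*}{n}\;\xrightarrow[n\to\infty]{}\;R^* \;=\; \frac{1}{2}\log_2\!\left(\alpha^2+\frac{\sigma^2_w}{D}\right),
\]
because $R_n^*$ is bounded. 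This gives the claimed lower bound ${\cal R}_{\ssum,ss}^{\IID,\op,1}(D)\ge R^*_\infty$ in \eqref{steady_state_rdf}, and simultaneously exhibits $D_t=D$ for all $t$ as the (asymptotically) optimal allocation.

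\medskip
\noindent The main obstacle I anticipate is justifying that the \emph{optimal} finite-horizon allocation (not merely the ansatz above) converges in the Ces\`aro sense so that no $O(1)$ gap survives in the limit. I would handle this in two directions: the upper direction, by evaluating the objective at the feasible constant choice $D_t=D$ to obtain $\frac{1}{n}\sum R_t^{\op}\le \tfrac{1}{2}\log_2(\alpha^2+\sigma^2_w/D)+o(1)$; the lower direction, by using the monotonicity and convexity (in $D_t$) of each per-step rate $\tfrac{1}{2}\log_2(\lambda_t/D_t)$ and Jensen's inequality, applied together with the recursion $\lambda_{t+1}=\alpha^2 D_t+\sigma^2_w$, to show that any deviation of the sequence $\{D_t\}$ from the constant level costs extra rate that does not average away. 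This two-sided argument closes the gap and yields \eqref{steady_state_rdf} with equality being attained by the constant allocation, as claimed.
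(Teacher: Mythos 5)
Your route differs from the paper's, but the part you actually carry out has gaps, and the fix you propose at the end is essentially the paper's own argument. Two concrete issues. First, you verify only that the constant $D_t=\xi$ is a \emph{fixed point} of the forward waterfilling recursion, not that the recursion started from $\lambda_1=\sigma^2_{x_1}$ (and with the distinct boundary value $\xi_n$) actually settles there; it does so only after a transient, and you would need to argue that the transient is $o(n)$ so that it does not affect the Ces\`aro limit. Second, the claim that $\xi\le\lambda$ holds ``whenever $\sigma^2_w>0$'' is not correct: when $\alpha^2<1$ one additionally needs $\xi<\sigma^2_w/(1-\alpha^2)$, i.e.\ $D$ below the open-loop steady-state variance, which is precisely the regime in which the expression in \eqref{steady_state_rdf} is nonnegative; outside it the waterfilling saturates on the other branch and the ansatz fails.

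As for the two-sided patch you sketch at the end, the ``lower direction'' (convexity of the per-step rate, Jensen, and the recursion $\lambda_{t+1}=\alpha^2 D_t+\sigma^2_w$) is exactly the paper's argument, and it is the cleaner route because it never requires identifying the waterfilling solution at all: starting from Theorem~\ref{theorem:fundam_lim_mmse}, write $\frac{1}{n}\sum_t R_t^*=\frac{1}{2n}\sum_t\log_2(\lambda_t/D_t)$, substitute $\lambda_t=\alpha^2 D_{t-1}+\sigma^2_w$, re-index so the sum telescopes to $\frac{1}{2n}\log_2(\lambda_1/D_n)+\frac{1}{2n}\sum_{t=1}^{n-1}\log_2\bigl(\alpha^2+\sigma^2_w/D_t\bigr)$, apply Jensen's inequality to the convex map $D\mapsto\log_2(\alpha^2+\sigma^2_w/D)$, use $D_n\le\sum_t D_t\le nD$, and let $n\to\infty$. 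Carrying out your ``lower direction'' would reproduce exactly this chain; the part of the proposal you completed, by itself, leaves the stated gap.
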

}
{
\begin{IEEEproof}
To obtain our result, we first take the average total-rate, i.e., $\frac{1}{n}\sum_{t=1}^nR_t$. Then, we show the following inequalities:
\begin{align}
\frac{1}{n}\sum_{t=1}^nR_t&\stackrel{(a)}\geq\frac{1}{n}\sum_{t=1}^nR_t^*\nonumber\\
&=\frac{1}{n}\sum_{t=1}^n\frac{1}{2}\log_2\left(\frac{\lambda_t}{D_t}\right)\nonumber\\
&\stackrel{(b)}=\frac{1}{n}\sum_{t=1}^n\left[\frac{1}{2}\log_2\left(\alpha^2{D}_{t-1}+\sigma^2_w\right)-\frac{1}{2}\log_2{D_t}\right]\nonumber\\
&=\frac{1}{2n}\log_2\left(\frac{\lambda_1}{D_n}\right)+\frac{1}{2n}\sum_{t=1}^{n-1}\log_2\left(\alpha^2+\frac{\sigma^2_w}{D_t}\right)\nonumber\\
&\stackrel{(c)}\geq\frac{1}{2n}\log_2\frac{\lambda_1}{\sum_{t=1}^nD_t}+\frac{1}{2}\frac{(n-1)}{n}\log_2\left(\alpha^2+\frac{(n-1)\sigma^2_w}{\sum_{t=1}^{n-1}D_t}\right)\nonumber\\
&\stackrel{(d)}\geq\frac{1}{2n}\log_2\left(\frac{\lambda_1}{nD}\right)+\frac{1}{2}\frac{(n-1)}{n}\log_2\left(\alpha^2+\frac{\sigma^2_w}{\frac{nD}{n-1}}\right),
\label{derivation_steady_state}
\end{align}
where $(a)$ follows from Theorem \ref{theorem:fundam_lim_mmse}; $(b)$ follows because for time-invariant processes $\lambda_t=\alpha^2D_{t-1}+\sigma^2_{w}$; $(c)$ follows because in the first term  $D_n\leq\sum_{t=1}^nD_t$ and in the second term we apply Jensen's inequality \cite[Theorem 2.6.2]{cover-thomas2006}; $(d)$ follows because in the first term $\sum_{t=1}^{n}D_t\leq{nD}$ and in the second term  $\sum_{t=1}^{n-1}D_t\leq\sum_{t=1}^{n}D_t\leq{nD}$ since $D_n\geq{0}$. {We prove that ${\cal R}_{\ssum,ss}^{\IID,\op,1}(D)\geq{R}^*_{\infty}$ where $R^*_{\infty}$ is given by \eqref{steady_state_rate_distortion} by evaluating \eqref{derivation_steady_state} in the limit $n\longrightarrow\infty$ and then minimizing both sides. This is obtained because the first term equals to zero ($\lambda_1, D$ are constants), in the second term $\lim_{n\longrightarrow\infty}\left(\frac{n-1}{n}\right)=1$, $\lim_{n\longrightarrow\infty}\left(\frac{n}{n-1}\right)=1$ and then by taking the minimization in both sides. This completes the proof.}
\end{IEEEproof}
\begin{remark}(Connections to existing works)\label{remark:existing_nrdf}
We note that the steady-state lower bound (per dimension) obtained in Corollary \ref{corollary:steady_state_rev_water} corresponds precisely to the solution of the time-invariant scalar-valued Gauss-Markov processes with per-time MSE distortion constraint derived in \cite[Equation (14)]{tatikonda:2004} and to the solution of stationary Gauss-Markov processes with MSE distortion constraint derived in \cite[Theorem 3]{derpich:2012}, \cite[Equation (1.43)]{gorbunov-pinsker1972b}. 
\end{remark}}
\subsection{Upper bounds to the minimum achievable total-rate}\label{subsec:upper_bound_mmse} 

In this section, we employ a sequential causal $\dpcm$-based scheme using pre/post filtered $\ecdq$ (for details, see, e.g., \cite[Chapter 5]{zamir:2014}) that ensures standard performance guarantees (achievable upper bounds) on the minimum achievable sum-rate ${\cal R}_{\ssum}^{\IID,\op,1}(D)=\sum_{t=1}^nR^{\op}_t$ of the multi-track setup of Fig. \ref{fig:multitrack}. {The reason for the choice of this quantization scheme is twofold. First, it can be implemented in practice and, second, it allows to find analytical achievable bounds and approximations on finite-dimensional quantizers which generate near-Gaussian quantization noise and Gaussian quantization noise for infinite dimensional quantizers \cite{zamir-feder:1996b}.}

\par  {We first describe the sequential causal $\dpcm$ scheme using $\mmse$ quantization for parallel time-varying Gauss-Markov processes. Then, we bound the rate performance of such scheme using $\ecdq$ and vector quantization followed by memoryless entropy coding. This can be seen as a generalization of \cite[Corollary 1.2]{ma-ishwar:2011} to any finite time when the rate is allocated at each time instant. Observe that because the state is modeled as a first-order Gauss-Markov process, the sequential causal coding is precisely equivalent to predictive coding (see, e.g., \cite{stavrou:2018stsp},~\cite[Theorem 3]{yang:2014}). Therefore, we can immediately apply the standard sequential causal $\dpcm$ \cite{farvardin:1985,jayant:1990} approach (with $\mathbb{R}^p$-valued $\mmse$ quantizers) to obtain an achievable rate in our system. \\
{\bf DPCM scheme.} At each time instant $t$ the {\it encoder or innovations' encoder} performs the linear operation 
\begin{align}
\widehat{X}_t=X_t-A_{t-1}Y_{t-1},\label{dpcm_enc}
\end{align}
where at $t=1$ we have $\widehat{X}_{1}=X_1$ and also $Y_{t-1}\triangleq{\bf E}\left\{X_{t-1}|S_{1,t-1}\right\}$, i.e., an estimate of $X_{t-1}$ given the previous quantized symbols $S_{1,t-1}$.\footnote{Note that the process $\widehat{X}_t$ has a temporal correlation since it is the error of $X_t$ from all quantized symbols $S_{1,t-1}$ and not the infinite past of the source $X_{-\infty,t}=(X_{-\infty},\ldots,X_t)$. Hence, $\widehat{X}_t$ is only an estimate of the true process.} Then, by means of a $\mathbb{R}^p$-valued $\mmse$ quantizer that operates at a rate (per dimension) $R_t$, we generate the quantized reconstruction $\widehat{Y}_t$ of the residual source $\widehat{X}_t$ denoted by $\widehat{Y}_t=Y_t-A_{t-1}Y_{t-1}$. Then, we send $S_t$ over the channel (the corresponding data  packet to $\widehat{Y}_t$). At the {\it decoder} we receive $S_t$ and recover the quantized symbol $\widehat{Y}_t$ of $\widehat{X}_t$.}
\begin{figure}[htp]
\centering
\includegraphics[width=\columnwidth]{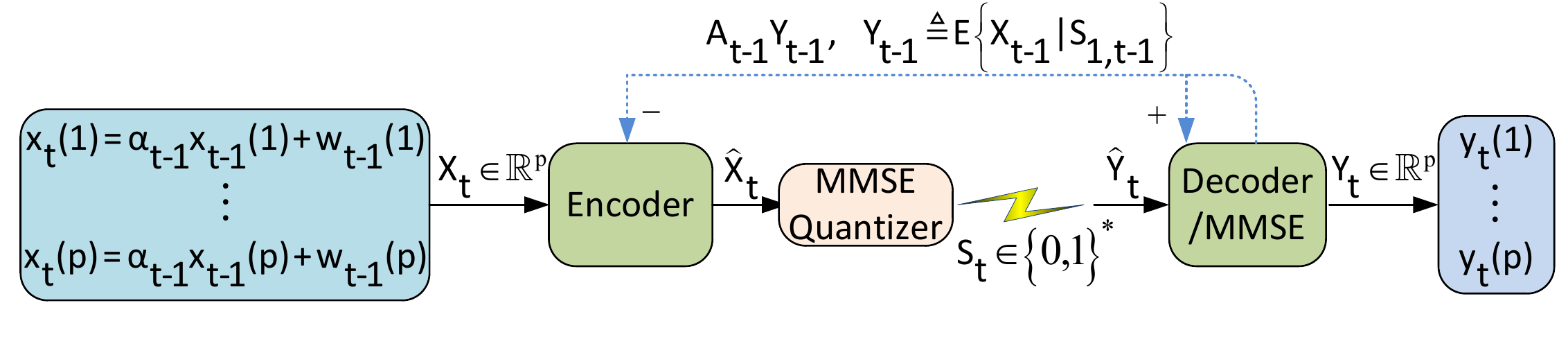}
\caption{{$\dpcm$ of parallel processes.}} 
\label{fig:dpcm}
\end{figure} 
Then, we generate the estimate $Y_t$ using the linear operation 
\begin{align}
Y_t=\widehat{Y}_t+A_{t-1}Y_{t-1}.\label{dpcm_dec}
\end{align}
Combining both \eqref{dpcm_enc}, \eqref{dpcm_dec}, we obtain
\begin{align}
X_t-Y_t=\widehat{X}_t-\widehat{Y}_t.\label{data_processing}
\end{align}
{\it MSE Performance.} From \eqref{data_processing}, we see that the error between $X_t$ and $Y_t$ is equal to the quantization error introduced by $\widehat{X}_t$ and $\widehat{Y}_t$. This also means that the $\mse$ distortion (per dimension) at each instant of time satisfy
\begin{align}
D_t=\frac{1}{p}{\bf E}\{||X_t-Y_t||_2^2\}=\frac{1}{p}{\bf E}\{||\widehat{X}_t-\widehat{Y}_t||_2^2\}.\label{dpcm_mse}
\end{align}
 A pictorial view of the $\dpcm$ scheme is given in Fig. \ref{fig:dpcm}.

{The following theorem is another main result of this section.}
\begin{theorem}(Upper bound to $R_{\ssum}^{\IID,\op,1}(D)$)\label{theorem:achievability}
Suppose that in \eqref{example:problem_form} we apply a sequential causal $\dpcm$-based $\ecdq$ with a lattice quantizer. Then, the minimum achievable total-rate ${\cal R}_{\ssum}^{\IID,\op,1}(D)=\sum_{t=1}^n{R}^{\op}_t$, where at each time instant ${R}^{\op}_t$ is upper bounded as follows:
\begin{align}
R^{\op}_t\leq{R}^*_t+\frac{1}{2}\log_2\left(2\pi{e}G_p\right)+\frac{1}{p},~\forall{t}, ~\mbox{(bits/dimension)},\label{achievable_bound_ecdq}
\end{align} 
where $R_t^*$ is obtained from Theorem \ref{theorem:fundam_lim_mmse}, $\frac{1}{2}\log_2\left(2\pi{e}G_p\right)$ is the divergence of the quantization noise from Gaussianity; $G_p$ is the dimensionless normalized second moment of the lattice \cite[Definition 3.2.2]{zamir:2014} and $\frac{1}{p}$ is the additional cost due to having prefix-free (instantaneous) coding.
\end{theorem}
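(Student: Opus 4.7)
The plan is to analyze a sequential $\dpcm$ architecture combined with subtractive-dithered lattice quantization with Wiener pre- and post-filtering, and to bound the per-time rate of the resulting entropy-coded scheme by $R^{*}_{t}$ plus the lattice and prefix-code overheads at every time instant.

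First, I would fix, for each time $t$, a lattice $\Lambda_{t}$ of dimensionless normalized second moment $G_{p}$, scaled so that the effective quantization noise after the Wiener post-filter has per-dimension second moment equal to $D_{t}$, the value produced by the reverse-waterfilling of Theorem~\ref{theorem:fundam_lim_mmse}. The $\dpcm$ innovation $\widehat{X}_{t}=X_{t}-A_{t-1}Y_{t-1}$ then has per-dimension conditional variance $\sigma^{2}_{\widehat{X}_{t}\mid S_{1,t-1}}/p = \lambda_{t} = \alpha_{t-1}^{2}D_{t-1}+\sigma_{w_{t-1}}^{2}$, which follows by substituting the state recursion~\eqref{example:GM_process1} and using the orthogonality of the fresh noise $W_{t-1}$ to the past reconstruction error $X_{t-1}-Y_{t-1}$ (an identity that must be verified inductively over $t$, together with $N_{t}\perp \widehat{X}_{t}$).

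Second, with a dither $U_{t}$ uniform on the fundamental Voronoi cell $V_{\Lambda_{t}}$, the encoder transmits $S_{t}=Q_{\Lambda_{t}}(a_{t}\widehat{X}_{t}+U_{t})$, where $a_{t}$ is the Wiener gain chosen so that after post-filtering the end-to-end relation reproduces the $\rdf$-optimal backward test channel. Using memoryless conditional entropy coding given the dither and all past messages, with a prefix-free penalty of at most one bit per channel use,
\[
p\,R^{\op}_{t}\;\le\;H\!\bigl(S_{t}\,\big|\,U_{t},S_{1,t-1}\bigr)+1.
\]
By the Zamir--Feder dither identity \cite[Ch.~4]{zamir:2014}, the conditional entropy on the right equals $I(\widehat{X}_{t};\widehat{X}_{t}+N_{t}\,|\,S_{1,t-1})$, where $N_{t}$ is, unconditionally, uniform on $-V_{\Lambda_{t}}$ and independent of $\widehat{X}_{t}$. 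I would then expand
\[
I(\widehat{X}_{t};\widehat{X}_{t}+N_{t}\,|\,S_{1,t-1})=h(\widehat{X}_{t}+N_{t}\,|\,S_{1,t-1})-h(N_{t}),
\]
upper-bound the first term by the entropy of a Gaussian with matched conditional variance (Gaussian maximum-entropy), and recognize that $[h(N_{t}^{G})-h(N_{t})]/p=\tfrac{1}{2}\log_{2}(2\pi e G_{p})$, which isolates the lattice overhead. After the Wiener pre/post filter calibration, the Gaussian-equivalent contribution collapses to the backward-channel value $\tfrac{1}{2}\log_{2}(\lambda_{t}/D_{t})=R_{t}^{*}$, so dividing by $p$ yields the claimed bound.

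The main obstacle is justifying the replacement of the naive forward-channel value $\tfrac{1}{2}\log_{2}(1+\lambda_{t}/D_{t})$ by the tight backward value $\tfrac{1}{2}\log_{2}(\lambda_{t}/D_{t})$ while simultaneously hitting the prescribed $\mse$ $D_{t}$; this is precisely what the Wiener pre/post filter is designed to achieve, following the template of \cite[Ch.~5]{zamir:2014}, and it relies on carrying through the recursive DPCM bookkeeping so that the dither-induced independence and the variance identity $\sigma^{2}_{\widehat{X}_{t}\mid S_{1,t-1}}/p=\lambda_{t}$ remain valid at every step. Summing the resulting per-instant inequality over $t\in\mathbb{N}_{1}^{n}$ then gives the stated upper bound on ${\cal R}_{\ssum}^{\IID,\op,1}(D)=\sum_{t=1}^{n}R^{\op}_{t}$.
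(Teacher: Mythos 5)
Your proposal is correct and follows essentially the same route as the paper's proof: a sequential DPCM architecture combined with dithered lattice ECDQ and Wiener pre/post filtering (the paper's explicit $H_t^{1/2}$ multipliers), invoking the Zamir--Feder dither identity to equate the conditional entropy of the quantizer with a mutual information over an additive uniform-noise channel, the divergence-from-Gaussianity bound $\tfrac{p}{2}\log_2(2\pi e G_p)$, and the one-bit prefix-free penalty. The only cosmetic difference is that the paper decomposes the uniform-noise mutual information via the Zamir--Feder relative-entropy identity and drops a non-negative divergence term, whereas you instead bound $h(\widehat{X}_t+N_t\mid S_{1,t-1})$ by its Gaussian counterpart directly and subtract $h(N_t)$; these are equivalent manipulations and lead to the same per-dimension bound.
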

\begin{IEEEproof}
See Appendix~\ref{proof:theorem:achievability}.
\end{IEEEproof}

\par {Next, we remark some technical comments related to Theorem \ref{theorem:achievability}, to better explain its novelty compared to the existing similar schemes in the literature.}

\begin{remark}(Comments on Theorem \ref{theorem:achievability})\label{remark:comments_on_theorem_achievability}
{\bf (1)} {The bound of Theorem \ref{theorem:achievability} allows the transmit rate to vary at each time instant for a finite time horizon while it achieves the $\mmse$ distortion at each time step $t$. This is because our $\dpcm$-based $\ecdq$ scheme is constrained by total-rates that we find at each instant of time using the dynamic reverse-waterfilling algorithm of Theorem  \ref{theorem:fundam_lim_mmse}. This loose rate-constraint is the new input of our bound compared to similar existing bounds in the literature (see, e.g., \cite[Theorem 6, Remark 16]{khina:2018}, \cite[Corollary 5.2]{silva:2011}, \cite[Theorem 5]{stavrou:2018stsp}) that assume {\it fixed rates averaged across the time or asymptotically average total rate constraints} hence restricting their transmit  rate at each instant of time to be the same for any time horizon.}\\
{\bf (2)} Recently in \cite{kostina:2019} (see also \cite{khina:2018}), it is pointed out that for discrete-time processes one can assume in the $\ecdq$ coding scheme the clocks of the entropy encoder and the entropy decoder to be synchronized, thus, eliminating the additional rate-loss due to prefix-free coding. This assumption, will give a better upper bound in Theorem \ref{theorem:achievability} because the term $\frac{1}{p}$ will be removed.
\end{remark}

\noindent{{\bf Steady-state performance.} Next, we describe how to obtain an upper bound on \eqref{performance_steady_state}. Suppose that the system is modeled by $p$-parallel time-invariant Gauss-Markov processes {(per dimension)} similar to \S\ref{subsec:steady-state_solution}.}
{
\begin{corollary}(Upper bound on \eqref{performance_steady_state})\label{corollary:achievability_ss}
Suppose that in \eqref{example:problem_form} we apply a sequential causal $\dpcm$-based $\ecdq$ with a lattice quantizer assuming the system is time-invariant and that $D_t=D$, $\forall{t}$. Then, the minimum achievable steady-state performance ${\cal R}_{\ssum,ss}^{\IID,\op,1}(D)={R}^{\op}_{\infty}$ is upper bounded as follows:
\begin{align}
{R}^{\op}_{\infty}\leq{R}^*_\infty+\frac{1}{2}\log_2\left(2\pi{e}G_p\right)+\frac{1}{p},~\mbox{(bits/dimension)},\label{achievable_bound_ecdq_steadu_state}
\end{align} 
where $R_\infty^*$ is given by \eqref{steady_state_rdf}.
\end{corollary}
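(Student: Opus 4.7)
The plan is to deduce the corollary directly from the finite-horizon upper bound in Theorem \ref{theorem:achievability} together with the steady-state lower bound in Corollary \ref{corollary:steady_state_rev_water}. Specifically, for the time-invariant setup with $A_{t-1}\equiv A$, $\Sigma_{W_t}\equiv\Sigma_W$ and a fixed per-time distortion budget $D_t=D$ for all $t$, the finite-time bound of Theorem \ref{theorem:achievability} applies term by term, so at each $t$ we have $R_t^{\op}\leq R_t^*+\tfrac{1}{2}\log_2(2\pi e G_p)+\tfrac{1}{p}$, where $R_t^*$ is the rate obtained from the reverse-waterfilling algorithm of Theorem \ref{theorem:fundam_lim_mmse}.

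Next, I would average both sides of this inequality over the horizon $n$. Since the quantizer-dependent overhead $\tfrac{1}{2}\log_2(2\pi e G_p)+\tfrac{1}{p}$ is constant in $t$, averaging leaves it unchanged, yielding
\begin{equation*}
\frac{1}{n}\sum_{t=1}^n R_t^{\op}\ \leq\ \frac{1}{n}\sum_{t=1}^n R_t^*\ +\ \frac{1}{2}\log_2(2\pi e G_p)+\frac{1}{p}.
\end{equation*}
Taking $\limsup_{n\to\infty}$ on both sides and using the definition of $R^{\op}_\infty$ in \eqref{steady_state_rate_distortion} gives the desired quantity on the left. For the right-hand side, I would invoke Corollary \ref{corollary:steady_state_rev_water}, which, under the constant-distortion assumption $D_t=D$, asserts that $\lim_{n\to\infty}\frac{1}{n}\sum_{t=1}^n R_t^*=R_\infty^*$ with $R_\infty^*$ given by \eqref{steady_state_rdf}. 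Combining these steps yields
\begin{equation*}
R_\infty^{\op}\ \leq\ R_\infty^* + \frac{1}{2}\log_2(2\pi e G_p)+\frac{1}{p},
\end{equation*}
which is exactly the claim.

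There is no real obstacle here, since all the heavy lifting is already in Theorem \ref{theorem:achievability} (achievability of the DPCM-based ECDQ scheme at each instant) and Corollary \ref{corollary:steady_state_rev_water} (identification of the per-letter limit of $R_t^*$ under a fixed per-time distortion). The only subtlety worth spelling out is that the constant-distortion choice $D_t=D$ is precisely what makes the allocation produced by the reverse-waterfilling algorithm stationary, so the steady-state of the innovations' variance $\lambda_t=\alpha^2 D+\sigma_w^2$ is well defined and the single-letter rate $R_t^*=\tfrac{1}{2}\log_2(\alpha^2+\sigma_w^2/D)$ is achieved at every $t$; hence the Cesàro limit is attained without relying on any additional stationarity or ergodicity argument beyond what is already established in Corollary \ref{corollary:steady_state_rev_water}.
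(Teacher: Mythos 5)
Your proof is correct and follows the same route the paper indicates: average the per-time achievability bound of Theorem \ref{theorem:achievability} over the horizon, pass to the limit $n\to\infty$, and identify $\lim_{n\to\infty}\tfrac{1}{n}\sum_{t=1}^n R_t^*$ with $R_\infty^*$ via Corollary \ref{corollary:steady_state_rev_water} under the constant-distortion allocation $D_t=D$. The paper's proof simply states ``This follows from Theorem \ref{theorem:achievability} and Corollary \ref{corollary:steady_state_rev_water}'' without spelling out the averaging/limit step; you make that step explicit, which is the intended argument.
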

\begin{IEEEproof}
This follows from  Theorem \ref{theorem:achievability} and Corollary \ref{corollary:steady_state_rev_water}. 
\end{IEEEproof}
We note that Corollary \ref{corollary:achievability_ss} is a known infinite time horizon bound derived in several paper in the literature, such as those discussed in Remark \ref{remark:comments_on_theorem_achievability}, {\bf (1)}. }

\paragraph*{\it Computation of Theorem \ref{theorem:achievability}} Unfortunately, finding $G_p$ in \eqref{achievable_bound_ecdq} for good high-dimensional quantizers {of possibly finite dimension} is currently an open problem (although it can be approximated for any dimension using for example product lattices \cite{conway-sloane1999}). Therefore, in what follows we propose existing computable bounds to the achievable upper bound of Theorem \ref{theorem:achievability} for any high-dimensional lattice quantizer. 
Note that these bounds were derived as a consequence of the main result by Zador \cite{conway-sloane1999}, namely, it is possible to reduce the $\mse$ distortion normalized per dimension using higher-dimensional quantizers. Toward this end, Zador introduced a lower bound on $G_p$ using the dimensionless normalized second moment of a $p$-dimensional sphere, hereinafter denoted by $G(S_p)$, for which it holds that:
\begin{align}
G(S_p)=\frac{1}{(p+2)\pi}\Gamma\left(\frac{p}{2}+1\right)^{\frac{2}{p}},\label{zador_lower_bound}
\end{align} 
where $\Gamma(\cdot)$ is the gamma function. Moreover, $G_p$ and $G(S_p)$ are connected via the following  inequalities:
\begin{align}
\frac{1}{2\pi{e}}\sr{(a)}\leq{G}(S_p)\sr{(b)}\leq{G}_p\sr{(c)}\leq\frac{1}{12},\label{connection_sphere_packing}
\end{align}
where $(a), (b)$ holds with equality for $p\longrightarrow\infty$; $(c)$ holds with equality if $p=1$.\\
Note that in \cite[equation (82)]{conway-sloane1999}, there is also an upper bound on $G_p$ due to Zador. The bound is the following:
\begin{align}
G_p\leq\frac{1}{p\pi}\Gamma\left(\frac{p}{2}+1\right)^{\frac{2}{p}}\Gamma\left(1+\frac{2}{p}\right).\label{zador_upper_bound}
\end{align}

%

\begin{figure}[ht]
\centering
\begin{subfigure}[b]{0.8\columnwidth}
{\includegraphics[width=\columnwidth]{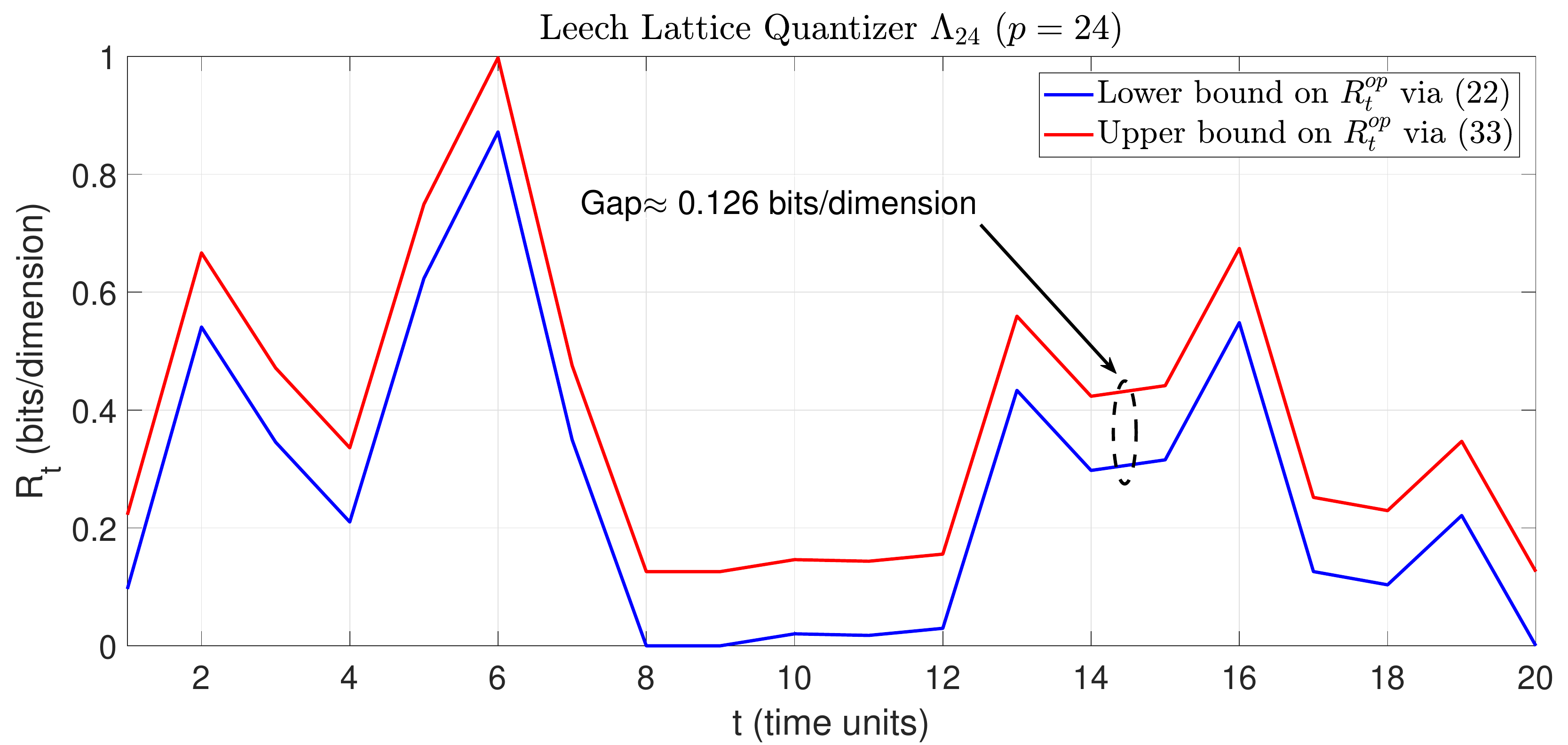}}
\end{subfigure}
\begin{subfigure}[b]{0.8\columnwidth}
{\includegraphics[width=\columnwidth]{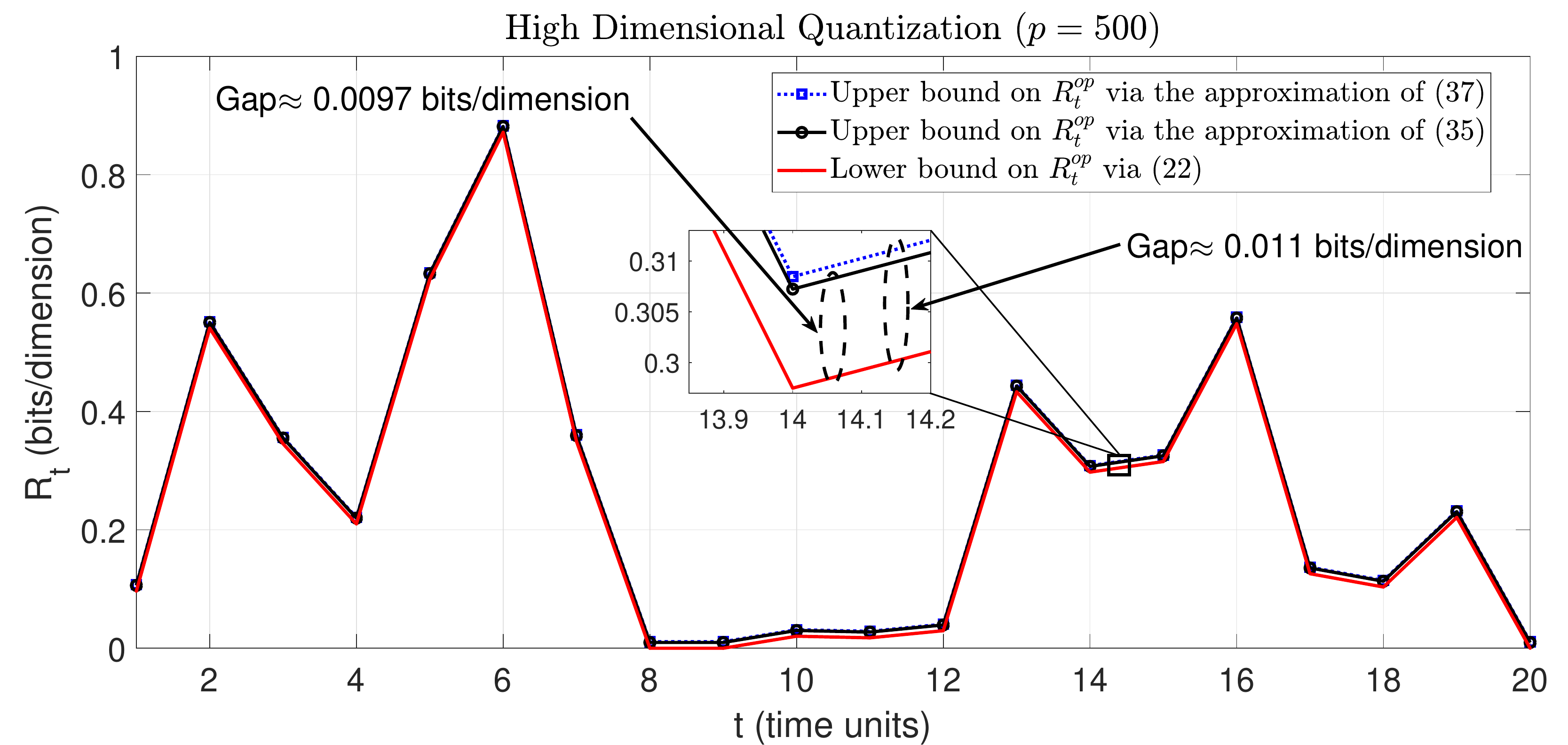}}
\end{subfigure}
\caption{Bounds on the minimum achievable total-rate.}
\label{fig:oper_rates}
\end{figure}
In Fig. \ref{fig:oper_rates} we illustrate two plots where we compute the bounds derived in Theorems \ref{theorem:fundam_lim_mmse}, \ref{theorem:achievability} for two different scenarios. 
{In Fig.~\ref{fig:oper_rates}, {(a)}, we choose $t=\{1,\ldots,20\}$, $a_t\in(0,1.5)$,~$\sigma^2_{w_t}=1$, and~$D=1$, to illustrate the gap between the time-varying rate-distortion allocation obtained using the lower bound \eqref{para_sol_eq.1} and the upper bound \eqref{achievable_bound_ecdq} when the latter is approximated with the best known quantizer up to twenty four dimensions that is a lattice known as Leech lattice quantizer (for details see, e.g., \cite[Table 2.3]{conway-sloane1999}). For this experiment the gap between the two bounds is approximately $0.126$ bits/dimension. In Fig.~\ref{fig:oper_rates}, {(b)}, we perform another experiment assuming the same values for ($a_t,~\sigma^2_{w_t},~D$), whereas the quantization is performed for $500$ dimensions. We observe that the achievable bounds obtained via \eqref{zador_lower_bound} and \eqref{zador_upper_bound} are quite tight (they have a gap of approximately $0.0014$ bits/dimension) whereas the gap between the lower bound \eqref{para_sol_eq.1} with the achievable upper bound \eqref{achievable_bound_ecdq} approximated by \eqref{zador_lower_bound} is $0.0097$ bits/dimension, and the one approximated by \eqref{zador_upper_bound} is approximately $0.011$ bits/dimension. Thus, compared to the first experiment where $p=24$, the gap between the bounds on the minimum achievable rate $R^{\op}_t$ is considerably decreased because we increased the number of dimensions in the system. Clearly, when the number of dimensions in the system increase, the gap between \eqref{para_sol_eq.1} and the high dimensional approximations of \eqref{achievable_bound_ecdq} will become arbitrary small. The two bounds will coincide as $p\longrightarrow\infty$, because then, the gap of coding noise from Gaussianity goes to zero (see, e.g., \cite{zador:1982}, \cite[Lemma 1]{zamir-feder:1996b}) and also because for $p\longrightarrow\infty$, \eqref{zador_lower_bound} is equal to \eqref{zador_upper_bound} (see, e.g.,\cite[equation (83)]{conway-sloane1999}).} 

\section{Application in NCSs}\label{sec:ncs}

\par In this section, {we demonstrate the sequential coding framework in the NCS setup of Fig. \ref{fig:scalar_ncs} by applying the results obtained in {\S}\ref{sec:examples}. }
\begin{figure}[htp]
\centering
\includegraphics[width=0.9\columnwidth]{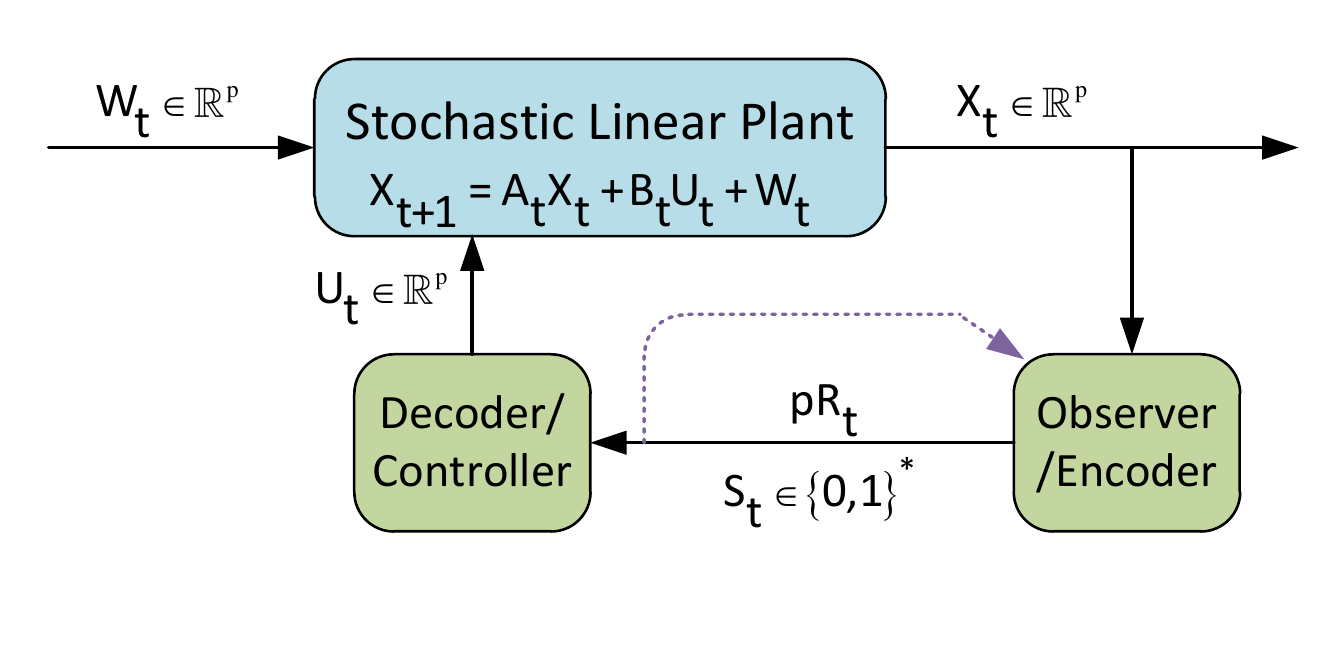}
\caption{Controlled system model.} 
\label{fig:scalar_ncs}
\end{figure}
 We first, describe each component of Fig. \ref{fig:scalar_ncs}.\\
\noindent{\bf Plant.} Consider $p$ parallel time-varying controlled Gauss-Markov processes as follows: 
\begin{align}
x_{t+1}(i)=\alpha_{t}x_{t}(i)+\beta_{t}u_t(i)+w_{t}(i),~i\in\mathbb{N}_1^p,~t\in\mathbb{N}_1^n,\label{example:controlled_process}
\end{align}
where $x_1(i)\equiv{x}_1$ is given with $x_1\sim{\cal N}(0;\sigma^2_{x_1})$,~$\forall{i}$; the non-random coefficients $(\alpha_t,\beta_t)\in\mathbb{R}$ are known to the system {with $(\alpha_t,\beta_t)\neq{0},~\forall{t}$}; {$\{u_t(i):~i\in\mathbb{N}_1^p\}$ is the controlled process with $u_t(i)\neq{u}_t(\ell),~$ for any $(i,{\ell})\in\mathbb{N}_1^p$}; $\{w_t(i)\equiv{w}_t:~i\in\mathbb{N}_1^p\}$ is an independent Gaussian noise process such that $w_t\sim{\cal N}(0;\sigma^2_{w_{t}})$, {$\sigma^2_{w_t}>0$}, independent of $x_1$,~$\forall{i}$. Again, similar to \S\ref{sec:examples}, \eqref{example:controlled_process} can be compactly written as follows
\begin{align}
X_{t+1}=A_{t}X_{t}+B_tU_t+W_{t},~X_1=\mbox{given},~t\in\mathbb{N}_1^n,\label{example:controlled_process1}
\end{align}
where {$A_t=\diag\left(\alpha_{t},\ldots,\alpha_t\right)\in\mathbb{R}^{p\times{p}}$, $B_t=\diag\left(\beta_{t},\ldots,\beta_t\right)\in\mathbb{R}^{p\times{p}}$,~$U_t\in\mathbb{R}^p$,~ $W_t\in\mathbb{R}^p\sim{\cal N}(0;\Sigma_{W_t})$, $\Sigma_{W_t}=\diag\left(\sigma^2_{w_t},\ldots,\sigma^2_{w_t}\right)\succ{0}$ is an independent Gaussian noise process independent of $X_1$}. Note that in this setup, the plant is fully observable for the observer that acts as an encoder but not for the controller due to the quantization noise (coding noise).\\
{\bf Observer/Encoder.} At the encoder the controlled process is collected into a frame $X_t\in\mathbb{R}^p$ from the plant and encoded as follows:
\begin{align}
S_t=f_t(X_{1,t},S_{1,t-1}),\label{example2:encoding}
\end{align}
where at $t=1$ we have $S_1=f_1(X_1)$, and $R_t=\frac{{\bf E}|S_t|}{p}$ is the rate at each time instant $t$ available for transmission via the noiseless channel. Note that in the design of Fig. \ref{fig:scalar_ncs}, the channel is noiseless, and the controller/decoder are deterministic mappings, thus, the observer/encoder implicitly has access to earlier control signals $U_{1,t-1}\in\mathbb{U}_{1,t-1}$.\\
{\bf Decoder/Controller.} The data packet $S_t$ is received by the controller using the following reconstructed sequence:
\begin{align}
U_t=g_t(S_{1,t}).~\label{example2:decoding}
\end{align}
According to \eqref{example2:decoding}, when the sequence $S_{1,t}$ is available at the decoder/controller, all past control signals $U_{1,t-1}$ are completely specified. \\
{\bf Quadratic cost.} The cost of control (per dimension) is defined as 
\begin{align}\label{eq:cost}
{\lqg}_{1,n}&\triangleq
\frac{1}{p}{\bf E}\left\{{\sum_{t = 1}^{n-1}\left(X_t\T\widetilde{Q}_tX_t  + U_t\T\widetilde{N}_tU_t\right) + X_n\T\widetilde{Q}_nX_n}\right\} ,
\end{align}
where $\widetilde{Q}_t=\diag\left(Q_t\ldots,Q_t\right)\succeq{0},~\widetilde{Q}_t\in\mathbb{R}^{p\times{p}}$ and $\widetilde{N}_t=\diag\left(N_t,\ldots,N_t\right)\succ{0},~\widetilde{N}_t\in\mathbb{R}^{p\times{p}}$, are designing parameters that penalize the state variables or the control signals.\\
{\bf Performance.} The performance of Fig. \ref{fig:scalar_ncs} (per dimension) can be cast to a finite-time horizon quantized \text{LQG} control problem subject to all communication constraints as follows:
\begin{align}
{\Gamma^{\IID,\op}_{\ssum}}(R)=\min_{\substack{(f_t,~g_t):~t=1,\ldots,n\\ \frac{1}{n}\sum_{t=1}^nR_t\leq{R}}}{\lqg}_{1,n}.\label{example2:problem_form}
\end{align}

\paragraph*{Iterative Encoder/Controller Design} In general, as \eqref{example2:problem_form} suggests, the optimal performance of the system in Fig. \ref{fig:scalar_ncs} is achieved only when
the encoder/controller pair is designed jointly. This is a quite challenging task especially when the channel is {noisy} because information structure {is} non-nested in such cases (for details see, e.g., \cite{yuksel-basar:2013}). There are examples, however, where the separation principle applies and the task comes much easier. More precisely, the so-called certainty equivalent controller remains optimal if the estimation errors are independent of previous control commands (i.e., dual effect is absent) \cite{bar-shalom:1974}. In our case, the optimal control strategy will be a certainty equivalence controller {\it if we assume a fixed and given sequence of encoders $\{f^*_t:~t\in\mathbb{N}_1^n\}$ {and the corresponding quantizer follows a predictive quantizer policy (similar to the $\dpcm$-based $\ecdq$ scheme proposed in \S\ref{subsec:upper_bound_mmse}), i.e., at each time instant it subtracts the effect of the previous control signals at the encoder and adds them at the decoder}} (see, e.g., \cite[Proposition 3]{bao-skoglund-johansson2011}, \cite{fu:2012}, \cite[\S{III}]{yuksel:2014}). Moreover, the separation principle will also be optimal if we consider an $\mmse$ estimate of the state (similar to what we have established in \S\ref{sec:examples}), and an encoder that minimizes a distortion for state estimation at the controller. The resulting separation principle is termed ``{\it weak separation principle}'' \cite{fu:2012} as it relies on the fixed and given quantization policies. This is different from the well-known full separation principle in the classical LQG stochastic control problem \cite{bertsekas:2005} where the problem separates naturally into a state estimator and a state feedback controller without any loss of optimality. 
{The previous analysis is described by a modified version of \eqref{example2:problem_form} as follows
\begin{align}
{\Gamma^{\IID,\op}_{\ssum}(R)\leq\Gamma^{\IID,\op,ws}_{\ssum}}=\min_{\substack{(f^*_t,~g_t):~t=1,\ldots,n\\ \frac{1}{n}\sum_{t=1}^nR_t\leq{R}}}{\lqg}_{1,n}.\label{example2:problem_form_weak_sep}
\end{align}
}
Next, we give the known solution of \eqref{example2:problem_form_weak_sep} in the form of a lemma that was first derived in \cite{tatikonda:2004,fu:2012} for the more general setup of correlated vector-valued controlled Gauss-Markov processes with linear quadratic cost. 
\begin{lemma}(Weak separation principle for Fig. \ref{fig:scalar_ncs})\label{lemma:scalar:separation}
The optimal controller that minimizes \eqref{example2:problem_form} is given by 
\begin{align}\label{eq:lem:scalar:separation:controller}
U_t = -L_t{\bf E}\left\{X_t|S_{1,t}\right\}, 
\end{align}
where ${\bf E}\left\{X_t|S_{1,t}\right\}$ are the fixed quantized state estimates obtained from the estimation problem in \S\ref{sec:examples}; $\widetilde{L}_t=\diag(L_t,\ldots,L_t)\in\mathbb{R}^p$ is the optimal $\lqg$ control (feedback) gain obtained as follows:
\begin{align}\label{eq:LQG:scalar:Kt}
\tilde{L}_t &=\left(B^2_t\widetilde{K}_{t+1}+\widetilde{N}_t\right)^{-1}B_t\widetilde{K}_{t+1}A_t,
\end{align}
and $\widetilde{K}_t=\diag(K_t,\ldots,K_t)\succeq{0}$ is obtained using the backward recursions:
\begin{align}\label{eq:lem:LQG:Lt}
\widetilde{K}_t &=A_t^2\left(\widetilde{K}_{t+1}-\widetilde{K}_{t+1}B^2_t(B_t^2\widetilde{K}_{t+1}+\widetilde{N}_t)^{-1}\widetilde{K}_{t+1}\right)+\widetilde{Q}_t,
\end{align}
with $\widetilde{K}_{n+1} = 0$.	Moreover, this controller achieves a minimum linear quadratic cost of 
\begin{align}
\begin{split}
{\Gamma^{\IID,\op,ws}_{\ssum}}=&\frac{1}{p}\sum_{t = 1}^n\Big\{\trace(\Sigma_{W_t}\widetilde{K}_t)+\trace(A_tB_t\widetilde{L}_t \widetilde{K}_{t+1}{\bf E}\{||X_t - Y_t||_2^2\})\Big\},\label{total_min_lqg}
\end{split}
\end{align}
where ${\bf E}\{||X_t - Y_t||_2^2\}$ is the $\mmse$ distortion obtained using any quantization (coding) in the control/estimation system. 
\end{lemma}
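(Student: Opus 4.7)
The plan is to establish the result by backward dynamic programming on the cost-to-go, exploiting the fact that the encoder sequence $\{f_t^*\}$ is fixed and that the quantizer has the predictive (DPCM-type) structure described in \S\ref{subsec:upper_bound_mmse}, so that the state estimation error $X_t - Y_t$, with $Y_t \triangleq \mathbf{E}\{X_t\,|\,S_{1,t}\}$, does not depend on the past controls $U_{1,t-1}$ (the ``no dual effect'' condition). Under this condition, the minimization is only over the decoder/controller maps $\{g_t\}$, and the problem fits within the standard certainty-equivalent LQG setting with noisy observations.

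First I would introduce the cost-to-go $V_t(X_t,S_{1,t-1})$ equal to the minimum of $\tfrac{1}{p}\mathbf{E}\bigl\{\sum_{s=t}^{n-1}(X_s\T\widetilde{Q}_sX_s+U_s\T\widetilde{N}_sU_s)+X_n\T\widetilde{Q}_nX_n\,\big|\,X_t,S_{1,t-1}\bigr\}$ over admissible controls $U_s=g_s(S_{1,s})$, $s\ge t$. I would then postulate and verify by backward induction the quadratic-plus-constant ansatz
\begin{align*}
V_t \;=\; \tfrac{1}{p}\,\mathbf{E}\!\left\{X_t\T\widetilde{K}_t X_t\,\big|\,S_{1,t-1}\right\} + c_t,
\end{align*}
with terminal value $\widetilde{K}_n=\widetilde{Q}_n$ and $c_n=0$. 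This is the natural DP ansatz in LQG, and its scalar-block structure is preserved throughout because $(A_t,B_t,\widetilde{Q}_t,\widetilde{N}_t,\Sigma_{W_t})$ are scalar matrices that commute and stay diagonal under the Riccati map.

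For the inductive step I would substitute $X_{t+1}=A_tX_t+B_tU_t+W_t$ into $V_{t+1}$, use independence of $W_t$ from $(X_t,S_{1,t},U_t)$, and apply the tower property conditioning first on $S_{1,t}$. Since $U_t$ is $S_{1,t}$-measurable, the cross term $2U_t\T B_t\widetilde{K}_{t+1}A_t X_t$ becomes $2U_t\T B_t\widetilde{K}_{t+1}A_t Y_t$ after taking the inner conditional expectation. Completing the square in $U_t$ gives the unique minimizer $U_t^\star=-\widetilde{L}_t Y_t$ with $\widetilde{L}_t=(B_t^2\widetilde{K}_{t+1}+\widetilde{N}_t)^{-1}B_t\widetilde{K}_{t+1}A_t$, matching \eqref{eq:LQG:scalar:Kt}. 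Plugging $U_t^\star$ back and using the orthogonality decomposition $X_t=Y_t+(X_t-Y_t)$ yields the Riccati recursion \eqref{eq:lem:LQG:Lt} for $\widetilde{K}_t$ together with an additive contribution $\tfrac{1}{p}\,\trace\!\bigl(A_tB_t\widetilde{L}_t\widetilde{K}_{t+1}\,\mathbf{E}\{\|X_t-Y_t\|_2^2\}\bigr)$ to $c_t$; the noise term contributes $\tfrac{1}{p}\trace(\Sigma_{W_t}\widetilde{K}_{t+1})$. Summing the constants from $t=1$ to $n$ then produces \eqref{total_min_lqg}.

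The main obstacle is the justification of the orthogonality/independence used when replacing $X_t$ by $Y_t$ in the cross term and when dropping correlations between the innovation $X_t-Y_t$ and the estimate $Y_t$: in general (when the encoder is jointly optimized with the controller) this fails because of dual effect. Here it is precisely the hypothesis that the encoder sequence $\{f_t^\ast\}$ is fixed and predictive (DPCM-type), so the past controls can be subtracted at the encoder and reinjected at the decoder without changing the innovation statistics, that removes the dual effect and validates the certainty-equivalence argument. Once this is accepted, all remaining steps are standard LQG algebra, and the diagonal structure of all system matrices ensures that $\widetilde{K}_t$ and $\widetilde{L}_t$ remain scalar matrices of the claimed form.
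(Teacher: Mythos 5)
The paper does not actually prove Lemma~\ref{lemma:scalar:separation}; it quotes it as a known result, attributing it to~\cite{tatikonda:2004} and~\cite{fu:2012}. Your backward dynamic-programming derivation is the standard argument behind the ``weak separation principle'' and it is essentially correct. In particular, you correctly locate the crux of the matter: the absence of dual effect. Because the encoder is fixed and predictive (subtracting past controls, which the decoder later reinjects), the estimation error $X_t-Y_t$ and its covariance are unaffected by the choice of control law, so certainty equivalence survives. Given that, completing the square in $U_t$ under the quadratic ansatz, the scalar-matrix Riccati recursion for $\widetilde{K}_t$, the formula for $\widetilde{L}_t$, and the orthogonal decomposition $X_t=Y_t+(X_t-Y_t)$ that produces the extra term $\trace\big(A_tB_t\widetilde{L}_t\widetilde{K}_{t+1}{\bf E}\{||X_t-Y_t||_2^2\}\big)$ all check out; the diagonal, pairwise-commuting structure of $(A_t,B_t,\widetilde{Q}_t,\widetilde{N}_t,\Sigma_{W_t})$ ensures the Riccati map preserves the scalar-matrix form.

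One minor notational repair is needed. You introduce the cost-to-go $V_t(X_t,S_{1,t-1})$ as a function of the true state, but your ansatz $\tfrac{1}{p}{\bf E}\{X_t\T\widetilde{K}_tX_t\,|\,S_{1,t-1}\}+c_t$ depends only on $S_{1,t-1}$. Since the controller never observes $X_t$, the object to iterate is the controller's value function $\bar V_t(S_{1,t-1})\triangleq{\bf E}\{V_t\,|\,S_{1,t-1}\}$ (a function of the information state). With this replacement the induction is consistent as written, and the substance of the argument is unchanged.
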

{
Before we prove our main theorem, we define the instantaneous cost of control as follows:
\begin{align}
\begin{split}
&{\lqg}^{\op}_t\triangleq\frac{1}{p}\Big\{\trace(\Sigma_{W_t}\widetilde{K}_t)+\trace(A_tB_t\widetilde{L}_t \widetilde{K}_{t+1}{\bf E}\{||X_t - Y_t||_2^2\})\Big\},~t\in\mathbb{N}_1^n.\end{split}\label{per_time_cost_of_control}
\end{align}
}
\par Next, we use Lemma \ref{lemma:scalar:separation} to derive a lower bound on \eqref{example2:problem_form_weak_sep}.   
{
\begin{theorem}({Lower bound on \eqref{example2:problem_form_weak_sep}})\label{theorem:fund_limt_lqg}
{For fixed coding policies, the minimum total-cost of control (per dimension) of \eqref{example2:problem_form_weak_sep}}, for any ``$n$'' and any $p$, however large, is ${\Gamma^{\IID,\op,ws}_{\ssum}}=\sum_{t=1}^n{\lqg}^{\op}_{t}$, with ${\lqg}^{\op}_{t}\geq{\lqg}^*_t$ such that
\begin{align}
{\lqg}_{t}^*=\sigma^2_{w_t}K_t + \alpha_t\beta_t L_t K_{t+1}D(R^*_t), \label{para_sol_cost_contr:eq.1}
\end{align}
where $D(R^*_t)$ is given by:
\begin{align}
D(R^*_t)&\triangleq\left\{ \begin{array}{ll} \frac{\sigma^2_{w_t}}{2^{2R^*_t}-\alpha^2_t},  & \mbox{$\forall{t}\in\mathbb{N}_1^{n-1}$} \\
2^{-2R^*_n}, &  \mbox{for $t=n$} \end{array} \right.,\label{min_dist:eq.1}
\end{align}
with the pair $(D(R^*_t),R^*_t)$ given by \eqref{para_sol_eq.1}-\eqref{values_of_xis}.
\end{theorem}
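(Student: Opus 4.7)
My plan is to combine Lemma~\ref{lemma:scalar:separation} with Theorem~\ref{theorem:fundam_lim_mmse}. Since the encoding policies $\{f_t^*\}$ are held fixed in the definition of $\Gamma^{\IID,\op,ws}_{\ssum}$, the weak-separation principle in Lemma~\ref{lemma:scalar:separation} expresses the optimal cost as the sum of the per-time costs in \eqref{per_time_cost_of_control}, so immediately $\Gamma^{\IID,\op,ws}_{\ssum}=\sum_{t=1}^n \lqg^{\op}_t$. Because $A_t,B_t,\widetilde L_t,\widetilde K_t$ and $\Sigma_{W_t}$ are scalar matrices they commute by pairs and are simultaneously diagonalizable \cite[Theorem 21.13.1]{harville:1997}, so every trace in \eqref{per_time_cost_of_control} is evaluated componentwise and, after dividing by $p$, collapses to
\[
\lqg^{\op}_t=\sigma^2_{w_t}K_t+\alpha_t\beta_t L_t K_{t+1}D_t, \qquad D_t\triangleq\tfrac{1}{p}{\bf E}\{\|X_t-Y_t\|_2^2\}.
\]

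With this reduction in hand, the task is to lower bound the per-dimension MMSE distortion $D_t$ by a function of the transmit rate $R_t^*$. This is where Theorem~\ref{theorem:fundam_lim_mmse} enters. Under the weak-separation principle the decoder delivers a sequential MMSE state estimate, and since $U_{t-1}$ is known at both encoder and decoder, subtracting it from the plant equation \eqref{example:controlled_process1} turns $\{X_t\}$ into an effective Gauss--Markov source of exactly the form treated in \S\ref{sec:examples}. The reverse-waterfilling identity $R_t^*=\tfrac{1}{2}\log_2(\lambda_t/D_t)$ of Theorem~\ref{theorem:fundam_lim_mmse} is therefore tight for the induced source, and inverting it using the NCS-indexed innovation variance yields $D_t\geq \sigma^2_{w_t}/(2^{2R_t^*}-\alpha_t^2)$ for every $t\in\mathbb{N}_1^{n-1}$. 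At the terminal stage $t=n$ the reverse-waterfilling parameter specializes to $\xi_n=\tfrac{1}{2\theta}$ (no forward recursion remains to feed into), which reduces \eqref{rev_water:eq.1} to the simpler closed form $D_n\geq 2^{-2R_n^*}$ reported in \eqref{min_dist:eq.1}.

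The proof is completed by substituting these distortion lower bounds into the displayed expression for $\lqg^{\op}_t$ to obtain $\lqg^{\op}_t\geq \lqg^*_t$ as in \eqref{para_sol_cost_contr:eq.1}, and summing over $t\in\mathbb{N}_1^n$. The step I expect to be the main obstacle is the index bookkeeping: the plant convention of \S\ref{sec:ncs} places $W_t$ in the update to $X_{t+1}$, whereas the source convention of \S\ref{sec:examples} places $W_{t-1}$ in the update to $X_t$, so translating the reverse-waterfilling relation of Theorem~\ref{theorem:fundam_lim_mmse} into the NCS indexing is what produces the $\sigma^2_{w_t}$ and $\alpha_t^2$ appearing in \eqref{min_dist:eq.1}, and it is precisely this shift that also justifies the terminal-time formula separately from the bulk recursion.
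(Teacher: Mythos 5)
Your proposal has the right skeleton — start from the weak-separation cost decomposition of Lemma~\ref{lemma:scalar:separation}, then lower-bound the per-dimension distortion $D_t=\tfrac{1}{p}{\bf E}\{\|X_t-Y_t\|_2^2\}$ by a function of the rate and plug it back in — but it glosses over the two steps that carry the actual technical content of the paper's proof, and the one you identify as ``the main obstacle'' is handled incorrectly. First, you say that ``inverting'' the reverse-waterfilling identity $R_t^*=\tfrac12\log_2(\lambda_t/D_t)$ using the NCS-indexed innovation variance yields $D_t\geq \sigma^2_{w_t}/(2^{2R_t^*}-\alpha_t^2)$. It does not. Inverting, with the index shift you describe ($\lambda_{t+1}=\alpha_t^2D_t+\sigma^2_{w_t}$), gives $D_{t+1}=(\alpha_t^2D_t+\sigma^2_{w_t})\,2^{-2R_{t+1}^*}$, which still couples $D_{t+1}$ to $D_t$. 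The paper's decoupled form comes from an explicit regrouping of the telescoping product $\prod_t \lambda_t/D_t$: one rewrites $\sum_{t=1}^n R_t^* = \tfrac12\bigl[\log_2\lambda_1+\sum_{t=1}^{n-1}\log_2(\alpha_t^2+\sigma_{w_t}^2/D_t)-\log_2 D_n\bigr]$, pairing $\lambda_{t+1}$ with $D_t$ rather than with $D_{t+1}$, and then \emph{reassigns} the per-time rate labels so that $R_t^*=\tfrac12\log_2(\alpha_t^2+\sigma_{w_t}^2/D_t)$ for $t<n$ and $R_n^*=-\tfrac12\log_2 D_n$. Only then does inversion give $D(R_t^*)=\sigma_{w_t}^2/(2^{2R_t^*}-\alpha_t^2)$ and $D(R_n^*)=2^{-2R_n^*}$. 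Without this regrouping, the per-time formula you quote simply does not follow.

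Second, the statement that the reverse-waterfilling identity is ``tight for the induced source,'' so you may invert, conflates a rate converse (given distortion) with a distortion converse (given rate), and it also skips the issue that $\{X_t\}$ in the controlled loop is not literally the Gauss--Markov source of \S\ref{sec:examples}. The paper's proof bridges both gaps with a chain of inequalities: it bounds ${\bf E}\{\|X_t-{\bf E}\{X_t|S_{1,t}\}\|_2^2\}$ from below via the entropy-power/MSE inequality \cite[Thm.\ 17.3.2]{cover-thomas2006} and Jensen's inequality as ${\bf E}\{\cdot\}\ge\tfrac{1}{2\pi e}\,2^{\frac{2}{p}h(X_t|S_{1,t-1})}2^{-2R_t^*}$, and it then argues that $h(X_t|S_{1,t-1})$ coincides with the Gaussian value $h(X_t|Y_{1,t-1})$ because $U_{t-1}=g_t(S_{1,t-1})$ is a constant conditioned on $S_{1,t-1}$. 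You mention the control-elimination idea in passing, which is the right instinct, but the entropy-power/Jensen machinery is what actually produces the inequality $D_t\geq D(R_t^*)$, and your proof omits it entirely. As written, your argument asserts the desired distortion bound rather than deriving it.
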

\begin{proof}
See Appendix \ref{proof:theorem:fund_limt_lqg}. 
\end{proof}
}
{In what follows, we include a technical remark related to the {\it lower bound} on the total cost-rate function of Theorem \ref{theorem:fund_limt_lqg}.}
{\begin{remark}(Technical remarks on Theorem \ref{theorem:fund_limt_lqg})\label{remark:technical_remarks_lower_bound}
 The expression of the lower bound in Theorem \ref{theorem:fund_limt_lqg}, can be reformulated for any $n$, and any $p$, to the equivalent expression of the total rate-cost function, denoted hereinafter by $\sum_{t=1}^nR({\lqg}_t^*)$, as follows
\begin{align}
R({\lqg}_t^*)=\frac{1}{2}\log_2\left(\alpha_t^2+\frac{\alpha_t\beta_tL_tK_{t+1}\sigma^2_{w_t}}{{\lqg}^*_t-\sigma^2_{w_t}K_t}\right), ~t\in\mathbb{N}_1^{n-1},\label{rate_cost_function}
\end{align} 
with $R({\lqg}_n^*)\equiv{R}_n^*$ as it is independent of ${\lqg}_n^*$. Interestingly, one can observe that by substituting in \eqref{rate_cost_function} the per-dimension version of \eqref{eq:LQG:scalar:Kt}  we obtain
\begin{align}
{R}({\lqg}_t^*)&=\frac{1}{2}\log_2\left(\alpha_t^2\left(1+\frac{\frac{\beta_t^2K^2_{t+1}\sigma^2_{w_t}}{\beta_t^2K^2_{t+1}+N_t}}{\lqg^*_t-\sigma^2_{w_t}K_t}\right)\right),\\
&=\frac{1}{2}\left[\log_2(\alpha_t^2)+\log_2\left(1+\frac{\frac{\beta_t^2K^2_{t+1}\sigma^2_{w_t}}{\beta_t^2K^2_{t+1}+N_t}}{\lqg^*_t-\sigma^2_{w_t}K_t}\right)\right].\label{rate_cost_alternative_def}
\end{align}
The bound in \eqref{rate_cost_alternative_def} extends the result of \cite[Equation (16)]{kostina:2019} from an asymptotically average total-rate cost function to the case of a total-rate cost function where at each instant of time  the rate-cost function is obtained using an allocation of $\lqg_t^*$ obtained due to the rate-allocation of the quantized state estimation problem of Theorem \ref{theorem:fundam_lim_mmse}. Additionally, the expression in \eqref{rate_cost_alternative_def} reveals an interesting observation regarding the absolute minimum data rates for mean square stability of the plant (per dimension), i.e., $\sup_t{\bf E}\{(x_t)^2\}<\infty$ (see, e.g., \cite[Eq. (25)]{nair:2004} for the definition) for a fixed finite time horizon. In particular, \eqref{rate_cost_alternative_def} suggests that for unstable time-varying plants with arbitrary disturbances modeled as in \eqref{example:controlled_process1}, and provided that at each time instant the cost of control (per dimension) is with communication constraints, i.e., ${\lqg}_t^*>\sigma^2_{{w}_t}K_t$ (the derivation without communication constraints is well known as the separation principle holds without a loss and ${\lqg}_t^*=\sigma^2_{{w}_t}K_t,~\forall{t}$ \cite{bertsekas:2005}), then, the minimum possible rates at each time instant $t$, namely, ${R}({\lqg}_t^*)$, cannot be lower than $\log_2|\alpha_t|$, when $|\alpha_t|>1$. This result extends known observations  for time-invariant plants (see e.g., \cite[Remark 1]{kostina:2019}) to parallel and (possibly unbounded) time-varying plants for any fixed finite time horizon. 
\end{remark}
}
Next, we use Theorem \ref{theorem:achievability} to find an upper bound on ${\Gamma^{\IID,\op,ws}_{\ssum}}$.
\begin{theorem}(Upper bound on \eqref{example2:problem_form_weak_sep})\label{theorem:achievability_lqg} Suppose that in the system of Fig. \ref{fig:scalar_ncs}, the fixed coding policies are obtained using the predictive coding scheme via sequential causal $\dpcm$-based  $\ecdq$ coding scheme with an $\mathbb{R}^p$-valued lattice quantizer described in Theorem \ref{theorem:achievability}. Then, ${\Gamma^{\IID,\op,ws}_{\ssum}}=\sum_{t=1}^n{\lqg}^{\op}_t$ for any $n$, and any $p$, with the instantaneous cost of control $\{\lqg_t:~t\in\mathbb{N}_1^{n-1}\}$ (per dimension) to be upper bounded as follows:
\begin{equation}
{\lqg}^{\op}_t\leq\sigma^2_{w_t}K_t+ \alpha_t\beta_t L_t K_{t+1}\frac{4^{\frac{1}{p}}(2\pi{e}G_p)\sigma^2_{w_t}}{2^{2R^{\op}_t}-4^{\frac{1}{p}}(2\pi{e}G_p)\alpha_t^2},
\label{achievable_bound_ecdq_lqg}
\end{equation} 
whereas, at $t=n$, ${\lqg}^{\op}_n=\sigma^2_{w_n}K_n$ and $R^{\op}_t$ is bounded above as in \eqref{achievable_bound_ecdq}.
\end{theorem}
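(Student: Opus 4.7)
The plan is to derive Theorem \ref{theorem:achievability_lqg} by cascading the weak separation principle of Lemma \ref{lemma:scalar:separation} with the operational rate--distortion guarantee of the DPCM-based ECDQ scheme provided by Theorem \ref{theorem:achievability}. First, I would apply Lemma \ref{lemma:scalar:separation}. Because the quantization policies are fixed and predictive, and because all the plant matrices $A_t,B_t,\widetilde{Q}_t,\widetilde{N}_t$ are scalar (hence so are $\widetilde{K}_t$ and $\widetilde{L}_t$), the per-dimension instantaneous cost of control decomposes cleanly as ${\lqg}^{\op}_t = \sigma^2_{w_t}K_t + \alpha_t\beta_t L_t K_{t+1}D_t$ for $t\in\mathbb{N}_1^{n-1}$, where $D_t = \frac{1}{p}{\bf E}\{||X_t-Y_t||_2^2\}$ is the per-dimension MSE delivered by the chosen coder. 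At the terminal time $t=n$ no control is exerted, so ${\lqg}^{\op}_n = \sigma^2_{w_n}K_n$, matching the terminal clause of the theorem.

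Second, I would bound $D_t$ using the rate that the DPCM--ECDQ scheme actually spends. Under the weak separation principle, past control signals $B_{t-1}U_{t-1}$ are deterministically known on both sides of the channel and can be absorbed into the predictor inside the DPCM loop; hence the residual fed to the $\mathbb{R}^p$-valued lattice quantizer reduces in distribution to the innovations $\widehat{X}_t = A_{t-1}(X_{t-1}-Y_{t-1}) + W_{t-1}$ already analyzed for the estimation-only problem in Section \ref{sec:examples}. In the indexing convention of Theorem \ref{theorem:fund_limt_lqg}, the per-dimension innovations variance is $\lambda_t = \alpha_t^2 D_t + \sigma^2_{w_t}$, so Theorem \ref{theorem:achievability} delivers the operational rate bound $R^{\op}_t \leq \frac{1}{2}\log_2(\lambda_t/D_t) + \frac{1}{2}\log_2(2\pi e G_p) + \frac{1}{p}$, which is itself the second assertion of the theorem being proved.

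Third, I would invert this inequality to obtain a closed-form upper bound on $D_t$. Writing $c \triangleq 4^{\frac{1}{p}}(2\pi e G_p)$, the rate bound rearranges to $\lambda_t \geq (2^{2R^{\op}_t}/c)D_t$, and substituting $\lambda_t = \alpha_t^2 D_t + \sigma^2_{w_t}$ produces the linear inequality $D_t(2^{2R^{\op}_t} - c\alpha_t^2) \leq c\sigma^2_{w_t}$. In the operating regime where the denominator is positive (consistent with the minimum data-rate discussion of Remark \ref{remark:technical_remarks_lower_bound}), this yields $D_t \leq c\sigma^2_{w_t}/(2^{2R^{\op}_t} - c\alpha_t^2)$; substituting this estimate back into the first step's expression for ${\lqg}^{\op}_t$ produces precisely \eqref{achievable_bound_ecdq_lqg}.

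The main obstacle I expect is step two: one must justify that the innovations variance recursion used in the pure estimation setting carries over unchanged to the controlled setting. This requires verifying that under fixed predictive coding policies the deterministic availability of $U_{1,t-1}$ at both encoder and decoder allows exact cancellation of $B_{t-1}U_{t-1}$ inside the DPCM predictor, and that this cancellation alters neither the lattice ECDQ rate--distortion trade-off nor the recursive form of $\lambda_t$. Once this reduction to the estimation-only setup is secured, the remainder of the argument is routine algebra.
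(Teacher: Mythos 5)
Your proposal follows the same route as the paper's proof: start from the weak-separation expression of Lemma \ref{lemma:scalar:separation}, bound the per-time distortion $D_t$ by inverting the DPCM--ECDQ rate guarantee of Theorem \ref{theorem:achievability} after reindexing to the decoupled form used in the proof of Theorem \ref{theorem:fund_limt_lqg}, and observe that $K_{n+1}=0$ makes the distortion term vanish at $t=n$. The concern you raise in your last paragraph about cancelling $B_{t-1}U_{t-1}$ inside the predictor is legitimate but is handled by hypothesis in the theorem statement (the coding policies are fixed and predictive, which is precisely what the weak separation principle requires), so no additional argument is needed and the rest is routine algebra as you say.
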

\begin{IEEEproof}
See Appendix \ref{proof:theorem:achievability_lqg}.
\end{IEEEproof}

\begin{remark}(Comments on Theorem \ref{theorem:achievability_lqg})
For infinitely large spatial components, i.e., $p\longrightarrow\infty$, the upper bound in \eqref{achievable_bound_ecdq_lqg} approaches the lower bound in Theorem \ref{theorem:fund_limt_lqg} because $G_\infty\longrightarrow\frac{1}{2\pi{e}}$ (see e.g, \cite[Lemma 1]{zamir-feder:1996b}). {Moreover, one can easily obtain the equivalent inverse problem of the total rate-cost function for the upper bound in \eqref{achievable_bound_ecdq_lqg} similar to Remark \ref{remark:technical_remarks_lower_bound}}.
\end{remark}

{ Next, we note the main technical difference of both Theorems \ref{theorem:fund_limt_lqg}, \ref{theorem:achievability_lqg} compared to existing results in the literature. 
\begin{remark}(Connections to existing works)
{\bf (1)} Our bounds on $\lqg$ cost extend similar bounds derived in \cite[Theorems 7, 8]{khina:2018} to average total-rate constraints for any fixed finite time horizon. This constraint requires the use of a dynamic reverse-waterfilling optimization algorithm (derived in Theorem \ref{theorem:fundam_lim_mmse}) to optimally assign the rates at each instant of time for the whole fixed finite time horizon. In contrast, the fixed rate constraint (averaged across the time) assumed in \cite[Theorem 7, 8]{khina:2018} does not require a similar optimization technique because at each instant of time the transmit rate is the same. Another structural difference compared to \cite[Theorem 7, 8]{khina:2018} is that in our bound we decouple the dependency of $D_{t-1}$ at each time instant. \\
{\bf (2)} Our results also extend the steady-state bounds on $\lqg$ cost obtained in \cite{silva:2011,silva:2016,tanaka:2016} to cost-rate functions constrained by total-rates obtained for any fixed finite time horizon. By assumption, the rate constraint in those papers implies fixed (uniform) rates at each instant of time whereas our bounds require a rate allocation algorithm to assign optimally the rate at each time slot. 
\end{remark}
\subsection{Steady-state solution of Theorems \ref{theorem:fund_limt_lqg}, \ref{theorem:achievability_lqg}}\label{subsec:steady-state_solution_lqg}}

{In this subsection, we study the steady-state case of the bounds derived in Theorems  \ref{theorem:fund_limt_lqg}, \ref{theorem:achievability_lqg}. We start by making the following assumptions, i.e.,
\begin{itemize}
\item[(A1)] we restrict the controlled process \eqref{example:controlled_process1} to be time invariant, which means that $A_t\equiv{A}=\diag\left(\alpha,\ldots,\alpha\right)\in\mathbb{R}^{p\times{p}}$, $B_t\equiv{B}=\diag\left(\beta,\ldots,\beta\right)\in\mathbb{R}^{p\times{p}}$,~$W_t\in\mathbb{R}^p\sim{\cal N}(0;\Sigma_{W})$, $\Sigma_{W}=\diag\left(\sigma^2_{w},\ldots,\sigma^2_{w}\right)\succ{0},~\forall{t}$;
\item[(A2)] we restrict the design parameters that penalize the control cost \eqref{eq:cost} to also be time invariant, i.e., $\widetilde{Q}_t\equiv\diag(Q,\ldots,Q)$, $\widetilde{N}_t\equiv\diag(N, \ldots, N)$;
\item[(A3)] we fix $D_t\equiv{D}$, $\forall{t}$.
\end{itemize}
}
{
We denote the steady-state value of the total cost of control,  (per dimension) as follows:
\begin{align}
{\lqg}_{\infty}=\limsup_{n\longrightarrow\infty}\frac{1}{n}\sum_{t=1}^n{\lqg}_t.\label{steady_state_lqg}
\end{align}
\noindent{\bf Steady-state Performance.} The minimum achievable steady-state performance (per dimension) of the quantized $\lqg$ control problem of Fig. \ref{fig:scalar_ncs} under the weak separation principle can be cast to the following optimization problem:
\begin{align}
\Gamma^{\IID,\op,ws}_{\ssum,ss}=\min_{\substack{(f^*_t,~g_t):~t=1,\ldots,\infty\\
R_\infty\leq{R}}}{\lqg}_{\infty}.\label{performance_steady_state_lqg}
\end{align}
}
{In the next two corollaries, we prove the lower and upper bounds on \eqref{performance_steady_state_lqg}. These bounds follow from the assumptions (A1)-(A3) and Corollaries \ref{corollary:steady_state_rev_water}, \ref{corollary:achievability_ss}.}
{
\begin{corollary}(Lower bound on \eqref{performance_steady_state_lqg})\label{corollary:steady_state_lqg_lower}
The minimum achievable steady state performance of \eqref{performance_steady_state_lqg}, under the assumptions (A1)-(A3),  for any  $p$, is such that $\Gamma^{\IID,\op,ws}_{\ssum,ss}\geq{\lqg}^{*}_{\infty}$, where
\begin{align}
{\lqg}^{*}_{\infty}=\sigma^2_{w}K_\infty + \alpha\beta L_\infty K_\infty\frac{\sigma^2_{w}}{2^{2R_{\infty}^*}-\alpha^2},\label{steady_state_lqg_lower_bound}
\end{align}
where ${\lqg}^*_{\infty}\triangleq\lim_{n\longrightarrow\infty}\frac{1}{n}\sum_{t=1}^n{\lqg}_t^*$, with $L_\infty, K_\infty$ given by \eqref{eq:LQG:scalar:Kt_sted_s} and \eqref{ss_explicit_K}, respectively. 
\end{corollary}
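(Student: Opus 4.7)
My plan is to view this corollary as the infinite-horizon specialization of Theorem \ref{theorem:fund_limt_lqg} under the stationarity assumptions (A1)--(A3), glued together with the steady-state rate-distortion bound from Corollary \ref{corollary:steady_state_rev_water}. I would start by invoking Theorem \ref{theorem:fund_limt_lqg} to bound the per-time cost of control by $\lqg_t^* = \sigma^2_{w_t} K_t + \alpha_t\beta_t L_t K_{t+1} D(R_t^*)$ and then form the Cesàro average $\frac{1}{n}\sum_{t=1}^n \lqg_t \geq \frac{1}{n}\sum_{t=1}^n \lqg_t^*$, which is what appears inside the $\limsup$ defining $\Gamma^{\IID,\op,ws}_{\ssum,ss}$.

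Next I would handle each block of terms under the steady-state limit. Under (A1) the coefficients $(\alpha_t,\beta_t,\sigma^2_{w_t})$ are constants and under (A2) the penalties $(Q_t, N_t)$ are constants, so the backward Riccati recursion \eqref{eq:lem:LQG:Lt}, read forward in the stationary regime, converges to the unique positive semidefinite fixed point $K_\infty$ of the algebraic Riccati equation, and the control gain \eqref{eq:LQG:scalar:Kt} converges to the associated $L_\infty$. (Here I would cite the results labelled \eqref{eq:LQG:scalar:Kt_sted_s} and \eqref{ss_explicit_K} that the corollary points to.) Consequently $\sigma^2_{w_t} K_t \to \sigma^2_w K_\infty$ and $\alpha_t\beta_t L_t K_{t+1} \to \alpha\beta L_\infty K_\infty$, and by Cesàro convergence these limits transfer to the time averages.

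For the distortion factor I would use (A3): with $D_t\equiv D$ the per-time bound in \eqref{min_dist:eq.1} gives $D(R_t^*)=\sigma^2_w/(2^{2R_t^*}-\alpha^2)$ for $t<n$, while the boundary term $t=n$ contributes a vanishing share $O(1/n)$ to the time average and therefore drops out in the $\limsup$. Combined with Corollary \ref{corollary:steady_state_rev_water}, which shows that under (A1)--(A3) the achievable average rate satisfies $R_\infty \geq R_\infty^* = \tfrac{1}{2}\log_2(\alpha^2 + \sigma^2_w/D)$, i.e.\ $D = \sigma^2_w/(2^{2R_\infty^*}-\alpha^2)$, so $D(R_t^*)$ converges to the desired steady-state value. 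Minimizing both sides over admissible coding policies satisfying $R_\infty\leq R$ preserves the inequality and yields \eqref{steady_state_lqg_lower_bound}.

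The main obstacle is the careful handling of the time-varying lower bound as $n\to\infty$: I have to make sure that (i) the Riccati recursion run backwards in a finite-horizon problem and then averaged produces the same stationary $K_\infty$ as the forward fixed-point equation, and (ii) the cross term $\alpha_t\beta_t L_t K_{t+1} D(R_t^*)$ commutes cleanly with both the Cesàro average and the $\limsup$, which I can justify because all three sequences are bounded and converge to constants, so their product's Cesàro limit equals the product of limits. Once these convergences are in place, the algebra is essentially a substitution.
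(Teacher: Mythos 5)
Your proposal is correct and follows essentially the same route as the paper: apply Theorem~\ref{theorem:fund_limt_lqg} for the per-time lower bound, use (A1)--(A2) to replace the Riccati quantities with their stationary fixed points $K_\infty, L_\infty$, use (A3) together with Corollary~\ref{corollary:steady_state_rev_water} to turn the distortion factor into $\sigma^2_w/(2^{2R^*_\infty}-\alpha^2)$, and pass to the $n\to\infty$ limit of the Cesàro average. If anything, you are a bit more explicit than the paper about why the time-varying finite-horizon Riccati sequence $K_t$ (with terminal condition $K_{n+1}=0$) can be replaced by $K_\infty$ inside the time average; the paper's step $(ii)$ presents this substitution as an exact identity rather than a Cesàro limit, so your added care there is welcome but does not constitute a different method.
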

\begin{IEEEproof}
The derivation follows from the assumptions (A1)-(A3).
In particular, 
\begin{align}
\frac{1}{n}\sum_{t=1}^n{\lqg}_t\stackrel{(i)}\geq&\frac{1}{n}\sum_{t=1}^n{\lqg}^{*}_t\nonumber\\
\stackrel{(ii)}=&\frac{1}{n}\sum_{t=1}^n\left(\sigma^2_{w}K_\infty + \alpha\beta L_\infty K_\infty{D}\right),\label{proof_ss_lqg_lower}
\end{align}
where $(i)$ follows from Theorem \ref{theorem:fund_limt_lqg};
$(ii)$ follows from the assumptions (A1)-(A3). In particular, by imposing the assumptions (A1), (A2), in Lemma \ref{lemma:scalar:separation} we obtain that the steady-steady optimal $\lqg$ control (feedback) gain (per dimension) becomes:
\begin{align}\label{eq:LQG:scalar:Kt_sted_s}
{L}_\infty=\frac{\alpha\beta{K_\infty}}{\beta^2{K}_{\infty}+{N}},
\end{align}
where $K_\infty$ is the positive solution of the quadratic equation:
\begin{align}\label{eq:lem:LQG:Lt_sted_s}
\beta^2{K}_\infty+\left((1-\alpha^2)N-\beta^2{Q}\right)K_\infty-QN =0,
\end{align}
given by the formula
\begin{align}
{K}_\infty=\frac{1}{2\beta^2}\left(\sqrt{\bar{f}^2+4\beta^2QN}-\bar{f}\right),\label{ss_explicit_K}
\end{align}
with $\bar{f}=(1-\alpha^2)N-\beta^2{Q}$. Finally by assumption (A3), we obtain from Corollary \ref{corollary:steady_state_rev_water} that $D\equiv{D}(R_t^*)=\frac{\sigma^2_{w}}{2^{2R_{\infty}^*}-\alpha^2}$, $\forall{t}$. The result follows once we let in \eqref{proof_ss_lqg_lower} $n\longrightarrow\infty$. This completes the derivation.
\end{IEEEproof}
}

{
\begin{corollary}(Upper bound on \eqref{performance_steady_state_lqg})\label{corollary:steady_state_lqg_upper}
The minimum achievable steady state performance of \eqref{performance_steady_state_lqg}, under the assumptions (A1)-(A3),  for any  $p$, is upper bounded as follows
\begin{align}
\Gamma^{\IID,\op,ws}_{\ssum,ss}\leq\sigma^2_{w}K_\infty + \alpha\beta L_\infty K_\infty\frac{4^{\frac{1}{p}}(2\pi{e}G_p)\sigma^2_{w}}{2^{2R^{\op}_{\infty}}-4^{\frac{1}{p}}(2\pi{e}G_p)\alpha^2},\label{steady_state_lqg_upper_bound}
\end{align}
where $R^{\op}_\infty$ is upper bounded by \eqref{achievable_bound_ecdq_steadu_state} and $K_\infty$, $L_\infty$ are given by \eqref{ss_explicit_K} and \eqref{eq:LQG:scalar:Kt_sted_s}, respectively. 
\end{corollary}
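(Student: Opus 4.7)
The plan is to mirror the derivation of Corollary \ref{corollary:steady_state_lqg_lower} but starting from the upper bound of Theorem \ref{theorem:achievability_lqg} instead of the lower bound of Theorem \ref{theorem:fund_limt_lqg}. First, I would take the average total cost of control $\frac{1}{n}\sum_{t=1}^n {\lqg}^{\op}_t$ and apply the per-time upper bound \eqref{achievable_bound_ecdq_lqg} of Theorem \ref{theorem:achievability_lqg} term by term, obtaining
\begin{align*}
\frac{1}{n}\sum_{t=1}^n{\lqg}^{\op}_t\leq\frac{1}{n}\sum_{t=1}^n\left(\sigma^2_{w_t}K_t + \alpha_t\beta_t L_t K_{t+1}\frac{4^{\frac{1}{p}}(2\pi e G_p)\sigma^2_{w_t}}{2^{2R^{\op}_t}-4^{\frac{1}{p}}(2\pi e G_p)\alpha_t^2}\right),
\end{align*}
with the terminal time contributing only $\sigma^2_{w_n}K_n$, which vanishes in the Cesàro average as $n\to\infty$.

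Next, I would use assumptions (A1)--(A2) to reduce the Riccati recursion \eqref{eq:lem:LQG:Lt} and the feedback gain \eqref{eq:LQG:scalar:Kt} to their time-invariant algebraic counterparts \eqref{eq:lem:LQG:Lt_sted_s}--\eqref{eq:LQG:scalar:Kt_sted_s}. Here, a standard stabilizability/detectability argument (identical to the one implicit in Corollary \ref{corollary:steady_state_lqg_lower}) ensures $K_t\to K_\infty$ and $L_t\to L_\infty$ along the horizon. Under assumption (A3) every summand becomes identical to its steady-state value, with $R^{\op}_t$ replaced by the common rate delivered by the $\dpcm$-based $\ecdq$ scheme under the time-invariant source. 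The Cesàro average then collapses to a single term, and taking $\limsup_{n\to\infty}$ yields the bound with $R^{\op}_t$ replaced by $R^{\op}_\infty$.

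Finally, I would invoke Corollary \ref{corollary:achievability_ss} to upper bound $R^{\op}_\infty$ by \eqref{achievable_bound_ecdq_steadu_state}, and substitute back. Because the right-hand side of \eqref{achievable_bound_ecdq_lqg} is monotone decreasing in $R^{\op}_t$ through the denominator $2^{2R^{\op}_t}-4^{\frac{1}{p}}(2\pi e G_p)\alpha_t^2$, the inequality direction is preserved under this substitution, yielding \eqref{steady_state_lqg_upper_bound}.

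The main subtlety --- and the only step that is not a mechanical reduction --- is justifying the exchange of $\limsup$ with the instantaneous bound \eqref{achievable_bound_ecdq_lqg}. One must verify that the denominator $2^{2R^{\op}_t}-4^{\frac{1}{p}}(2\pi e G_p)\alpha^2$ stays strictly positive uniformly in $t$, i.e.\ that the operational rates achieved by the $\ecdq$ scheme are large enough to keep the quantizer from overloading in the steady-state regime. This is precisely the mean-square stabilizability condition $2^{2R^{\op}_t}>4^{\frac{1}{p}}(2\pi e G_p)\alpha^2$ for all sufficiently large $t$, which is implicitly assumed by requiring \eqref{achievable_bound_ecdq_steadu_state} to yield a finite cost; once this is noted, monotonicity finishes the proof.
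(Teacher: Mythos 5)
Your proposal is correct and is exactly what the paper intends: the paper's own proof is a one-liner stating that the derivation mirrors the lower-bound case of Corollary \ref{corollary:steady_state_lqg_lower}, replacing Theorem \ref{theorem:fund_limt_lqg} by Theorem \ref{theorem:achievability_lqg} and Corollary \ref{corollary:steady_state_rev_water} by Corollary \ref{corollary:achievability_ss}, which is precisely the structure you lay out. Your added remarks on the convergence of the Riccati recursion and on the positivity of the denominator $2^{2R^{\op}_t}-4^{\frac{1}{p}}(2\pi e G_p)\alpha^2$ are useful extra rigor that the paper leaves implicit.
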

\begin{IEEEproof}
We omit the derivation because it is similar to the one obtained for the lower bound. In contrast to the lower bound, here we make use of Theorem \ref{theorem:achievability_lqg} and Corollary \ref{corollary:achievability_ss}. 
\end{IEEEproof}
Note that for Theorems \ref{theorem:fund_limt_lqg}, \ref{theorem:achievability_lqg} we can remark the following. 
\begin{remark}(Comments on Corollaries \ref{corollary:steady_state_lqg_lower}, \ref{corollary:steady_state_lqg_upper})\label{remark:conditions_stability}
The lower bound of Corollary \ref{corollary:steady_state_lqg_lower} (per dimension) is precisely the bound obtained by Tatikonda et al. in \cite[\S{V}]{tatikonda:2004} (see also \cite[\S{6}]{tatikonda:1998}) for scalar time-invariant Gauss-Markov processes. The upper bound of Corollary \ref{corollary:steady_state_lqg_upper} (per dimension) is similar to the upper bounds derived in \cite{silva:2011,silva:2016,khina:2018}. It is also similar to the upper bound obtained in \cite{tanaka:2016} albeit their space-filling term is obtained differently. 
\end{remark}
}
{
\section{Discussion and Open Questions}\label{sec:discussion}
}
{ In this section, we discuss certain open problems that can be solved based on this work and discuss certain observations that stem from our main results.
\subsection{Dynamic reverse-waterfilling algorithm for multivariate Gaussian processes}
Further to the technical observation raised in Remark \ref{remark:discussion:theorem:mmse}, {\bf (1)}, it seems that the simultaneous diagonalization of $(A_t, \Sigma_{W_t}, \Delta_t, \Lambda_t)$ by an orthogonal matrix  is sufficient in order to extend the derivation of a dynamic reverse-waterfilling algorithm to the more general case of multivariate time-varying Gauss-Markov processes. Our claim is further supported by the fact that for time-invariant multidimensional Gauss-Markov processes simultaneous diagonalization is shown to be sufficient for the derivation of a reverse-waterfilling algorithm in \cite[Corollary 1]{stavrou:2020}.
\subsection{Non-Gaussian processes}
Although not addressed in this paper, the non-asymptotic lower bounds derived in Theorems \ref{theorem:fundam_lim_mmse}, \ref{theorem:fund_limt_lqg} can be extended to linear models driven by independent non-Gaussian noise processes ${w_t}\sim(0;\sigma^2_{w_t})$ using entropy power inequalities \cite{kostina:2019}. 
\subsection{Packet drops with instantaneous ACK}
It would be interesting to extend our setup to the more practical scenario of communication links prone to packet drops. In such case one needs to take into account the various packet erasure models (e.g., $\IID$ or Markov models) to study their impact on the non-asymptotic bounds derived for the two application examples of this paper. Existing results for uniform (fixed) rate allocation are already studied in \cite{khina:2018}.
}

%
%
%
%
\section{Conclusion}\label{sec:conclusions}

\par {We revisited the sequential coding of correlated sources with independent spatial components to use it in the derivation of non-asymptotic, finite dimensional lower and upper bounds for two application examples in stochastic systems. Our application examples included a parallel time-varying quantized state-estimation problem subject to a total $\mse$ distortion constraint and a parallel time-varying quantized $\lqg$ closed-loop control system with linear quadratic cost. For the latter example, its lower bound revealed the minimum possible rates for mean square stability of the plant at each instant of time  when the system operates for a fixed finite time horizon. 
 }

\appendices

\section{Proof of Theorem \ref{theorem:fundam_lim_mmse}}\label{proof:theorem:fundam_lim_mmse}

Since the source is modeled as a time-varying first-order Gauss-Markov process, then from \eqref{eq:sumrate2}  we obtain:
\begin{align}
\begin{split}
&{\cal R}_{{\ssum}}^{\IID,\op,1}(D)\geq{\cal R}_{{\ssum}}^{\IID,1}(D),\\
&=\min_{\substack{\frac{1}{n}\frac{1}{p}\sum_{t=1}^n{\bf E}\left\{||X_t-Y_t||_2^2\right\}\leq{D},\\Y_1\leftrightarrow{X_{1}}\leftrightarrow{X}_{2,n},\\~Y_t\leftrightarrow(X_{t},Y_{1,t-1})\leftrightarrow(X_{1,t-1},{X}_{t+1,n})}}\frac{1}{p}\sum_{t=1}^nI(X_{t};Y_t|Y_{1,t-1})\end{split}.\label{exam:eq:sumrate1}
\end{align}
It is trivial to see that the $\rhs$ term in \eqref{exam:eq:sumrate1} corresponds precisely to the sequential or $\nrdf$ obtained for parallel Gauss-Markov processes with a total $\mse$ distortion constraint which is a simple generalization of the scalar-valued problem that has already been studied in \cite{stavrou:2018lcss}.  Therefore, using the analysis of \cite{stavrou:2018lcss} we can obtain:
\begin{align}
&{\cal R}_{\ssum}^{\IID,1}(D)\nonumber\\
&\stackrel{(a)}=\min_{\mbox{contraint in \eqref{exam:eq:sumrate1}}}\frac{1}{p}\sum_{t=1}^n\left\{h(X_t|Y_{1,t-1})-h(X_t|Y_{1,t})\right\}\nonumber\\
&\stackrel{(b)}=\frac{1}{p}\min_{\substack{\Delta_t\succeq{0},~t\in\mathbb{N}_1^n\\ \frac{1}{n}\frac{1}{p}\sum_{t=1}^n\trace{(\Delta_t)}\leq{D}}} \sum_{t=1}^n  \max\left[0,\frac{1}{2}\log_2\left(\frac{|\Lambda_t|}{|\Delta_t|}\right)\right],\nonumber\\
&=\min_{\substack{D_t\geq{0},~t\in\mathbb{N}_1^n\\\frac{1}{n}\sum_{t=1}^nD_t\leq{D}}} \sum_{t=1}^n  \max\left[0,\frac{1}{2}\log_2\left(\frac{\lambda_t}{D_t}\right)\right],\label{scalar:eq.solution}
\end{align}
{where $(a)$ follows by definition; $(b)$ follows from the fact that $h(X_t|Y_{1,t-1})=\frac{1}{2}\log_2(2\pi{e})^p|\Lambda_t|$ where $\Lambda_t=\diag\left(\lambda_t,\ldots\lambda_t\right)\in\mathbb{R}^{p\times{p}}$ with $\lambda_t=\alpha^2_{t-1}D_{t-1}+\sigma^2_{w_{t-1}}$, and that $h(X_t|Y_{1,t})=\frac{1}{2}\log_2(2\pi{e})^p|\Delta_t|$ where $\Delta_t=\diag\left(D_t,\ldots,D_t\right)\in\mathbb{R}^{p\times{p}}$ for $D \in [0, \infty)$. The optimization problem of  (\ref{scalar:eq.solution}) is already solved in \cite[Theorem 2]{stavrou:2018lcss} and is given by \eqref{para_sol_eq.1}-\eqref{values_of_xis}.}

\section{Proof of Theorem \ref{theorem:achievability}}\label{proof:theorem:achievability}

{In this proof we bound the rate performance of the $\dpcm$ scheme described in \S\ref{subsec:upper_bound_mmse} at each time instant for any fixed finite time $n$ using an $\ecdq$ scheme that utilizes the forward Gaussian test-channel realization that achieves the lower bound of Theorem \ref{theorem:fundam_lim_mmse}. In this scheme in fact we replace the quantization noise with an additive Gaussian noise with the same second moments.\footnote{See e.g., \cite{zamir-feder:1996} or \cite[Chapter 5]{zamir:2014} and the references therein.} First note that the Gaussian test-channel linear realization of the lower bound in Theorem \ref{theorem:fundam_lim_mmse} is known to be\cite{stavrou:2018lcss}
\begin{align}
Y_t=H_tX_t+(I_p-H_t)A_{t-1}Y_{t-1}+H^{\frac{1}{2}}V_t,~V_t\sim{\cal N}(0;\Delta_t),\label{realization}
\end{align} 
where $H_t\triangleq{I_p-\Delta_t\Lambda^{-1}_t}\succeq{0}$,~$\Delta_t\triangleq\diag(D_t,\ldots,D_t)\succ{0}$,~$\Lambda_t=\diag\left(\lambda_t,\ldots\lambda_t\right)\succ{0}$.\\
{\bf Pre/Post Filtered ECDQ with multiplicative factors for parallel sources.} \cite{zamir-feder:1996} First, we consider a $p-$dimensional lattice quantizer $Q_p$ \cite{conway-sloane1999} such that
\begin{align}
{\bf E}\{Z_tZ_t\T\}=\Sigma_{V^c_t}, ~\Sigma_{V_t^c}\succ{0},\nonumber
\end{align}
where $Z_t\in\mathbb{R}^p$ is a random dither vector generated both at the encoder and the decoder independent of the input signals $\widehat{X}_t$ and the previous realizations of the dither, uniformly distributed over the basic Voronoi cell of the $p-$dimensional lattice quantizer $Q_p$ such that $V_t^c\sim{Unif}(0;\Sigma_{V_t^c})$. At the {\it encoder} the lattice quantizer quantize $H_{t}^{\frac{1}{2}}\widehat{X}_t+Z_t$, that is, $Q_p(H_{t}^{\frac{1}{2}}\widehat{X}_t+Z_t)$ ,where $\widehat{X}_t$ is given by \eqref{dpcm_enc}. Then, the encoder applies entropy coding to the output of the quantizer and transmits the output of the entropy coder. At the {\it decoder} the coded bits are received and the output of the quantizer is reconstructed, i.e.,  $Q_p(H_{t}^{\frac{1}{2}}\widehat{X}_t+Z_t)$. Then, it generates an estimate by subtracting ${Z_t}$ from the quantizer's output and multiplies the result by $\Phi_t$ as follows:
\begin{align}
Y_t=\Phi_t(Q_p(H_{t}^{\frac{1}{2}}\widehat{X}_t+Z_t)-Z_t),\label{dec_ecdq}
\end{align}
where $\Phi_t=H_{t}^{\frac{1}{2}}$. The {\it coding rate at each instant of time} of the conditional entropy of the $\mse$ quantizer is given by\cite{zamir-feder:1996}
\begin{align}
H(Q_p|Z_t)&={I}(H^{\frac{1}{2}}\widehat{X}_t;H\widehat{X}_t+H^{\frac{1}{2}}V_t^c)\nonumber\\
&\stackrel{(a)}={I}(H^{\frac{1}{2}}\widehat{X}_t;H\widehat{X}_t+H^{\frac{1}{2}}V_t)+{\cal D}(V^c_t||V_t)\nonumber\\
&\qquad-{\cal D}(H\widehat{X}_t+H^{\frac{1}{2}}V^c_t||H\widehat{X}_t+H^{\frac{1}{2}}V_t)\nonumber\\
&\stackrel{(b)}\leq{I}(H^{\frac{1}{2}}\widehat{X}_t;H\widehat{X}_t+H^{\frac{1}{2}}V_t)+{\cal D}(V^c_t||V_t)\nonumber\\
&\stackrel{(c)}\leq{I}(H^{\frac{1}{2}}\widehat{X}_t;H\widehat{X}_t+H^{\frac{1}{2}}V_t)+\frac{p}{2}\log(2\pi{e}G_p)\nonumber\\
&\stackrel{(d)}={I}(X_t;Y_t|Y_{1,t-1})+\frac{p}{2}\log(2\pi{e}G_p)
\label{cond_entropy_quant}
\end{align}
where $V_t^c\in\mathbb{R}^p$ is the (uniform) coding noise in the $\ecdq$ scheme and $V_t$ is the corresponding Gaussian counterpart; $(a)$ follows because the two random vectors $V_t^c, V_t$ have the same second moments hence we can use the identity ${\cal D}(x||x')=h(x')-h(x)$; $(b)$  follows because ${\cal D}(H\widehat{X}_t+H^{\frac{1}{2}}V_t^c||H\widehat{X}_t+H^{\frac{1}{2}}V_t)\geq{0}$; $(c)$ follows because the divergence of the coding noise from Gaussianity is less than or equal to $\frac{p}{2}\log(2\pi{e}G_p)$ \cite{zamir-feder:1996b} where $G_p$ is the dimensionless normalized second moment of the lattice \cite[Definition 3.2.2]{zamir:2014}; $(d)$ follows from data processing properties, i.e., $I(X_t;Y_t|Y_{1,t-1})\stackrel{(\ast)}=I(X_t;Y_t|Y_{t-1})\stackrel{(\ast\ast)}=I(\widehat{X}_t;\widehat{Y}_t)\stackrel{(\ast\ast\ast)}={I}(H^{\frac{1}{2}}\widehat{X}_t;H\widehat{X}_t+H^{\frac{1}{2}}V_t)$  where $(\ast)$ follows from the realization of \eqref{realization}, $(\ast\ast)$ follows from the fact that $\widehat{X}_t$ and $\widehat{Y}_t$ (obtained by \eqref{dpcm_dec}) are independent of  $Y_{t-1}$, and $(\ast\ast\ast)$ follows from  \eqref{dpcm_enc}, \eqref{realization} and the fact that $H$ is an invertible operation. Since we assume joint (memoryless) entropy coding with lattice quantizers, then, the total coding rate per dimension is obtained as follows\cite[Chapter 5.4]{cover-thomas2006}
\begin{align}
\sum_{t=1}^n\frac{{\bf E}|S_t|}{p}&\leq\frac{1}{p}\sum_{t=1}^n\left({H}(Q_p|Z_t)+1\right)\nonumber\\
&\stackrel{(e)}\leq\frac{1}{p}\sum_{t=1}^n{I}({X}_t;{Y}_t|Y_{1,t-1})+\frac{n}{2}\log(2\pi{e}G_p)+\frac{n}{p}\nonumber\\
&\stackrel{(f)}=\frac{1}{2p}\sum_{t=1}^n\log_2\frac{|\Lambda_t|}{|\Delta_t|}+\frac{n}{2}\log(2\pi{e}G_p)+\frac{n}{p},\label{ecdq_ineq}
\end{align}
where $(e)$ follows from \eqref{cond_entropy_quant}; $(f)$ follows from the derivation of Theorem \ref{theorem:fundam_lim_mmse}. The derivation is complete once we minimize both sides of inequality in \eqref{ecdq_ineq} with the appropriate constraint sets. }

\section{Proof of Theorem \ref{theorem:fund_limt_lqg}}\label{proof:theorem:fund_limt_lqg}
{
Note that from \eqref{total_min_lqg} we obtain
\begin{align}
\begin{split}
&{\Gamma^{\IID,\op,ws}}=\sum_{t=1}^n{\lqg}^{\op}_t\\
&=\frac{1}{p}\sum_{t=1}^n\Big\{\trace(\Sigma_{W_t}\widetilde{K}_t)\\
&+\trace(A_tB_t\widetilde{L}_t \widetilde{K}_{t+1}{\bf E}\{||X_t - Y_t||_2^2\})\Big\}\\
&\stackrel{(a)}\geq\frac{1}{p}\sum_{t=1}^n\Big\{\trace(\Sigma_{W_t}\widetilde{K}_t)\\
&+\trace(A_tB_t\widetilde{L}_t \widetilde{K}_{t+1}{\bf E}\{||X_t - {\bf E}\{X_t|S_{1,t}\}||_2^2\})\Big\}\\
&\stackrel{(b)}\geq\frac{1}{p}\sum_{t=1}^n\Bigg\{\trace(\Sigma_{W_t}\widetilde{K}_t)+\trace\Bigg(A_tB_t\widetilde{L}_t \widetilde{K}_{t+1}\\
&{\bf E}_{\bar{S}_{1,t-1}}\left\{\frac{1}{2\pi{e}}2^{\frac{2}{p}h(X_t|S_{1,t-1}=\bar{S}_{1,t-1})}\right\}2^{-2R^*_t}\Bigg)\Bigg\}\\
&\stackrel{(c)}\geq\frac{1}{p}\sum_{t=1}^n\Bigg\{\trace(\Sigma_{W_t}\widetilde{K}_t)+\trace\Bigg(A_tB_t\widetilde{L}_t \widetilde{K}_{t+1}\\
&\left\{\frac{1}{2\pi{e}}2^{\frac{2}{p}h(X_t|S_{1,t-1})}2^{-2R^*_t}\right\}\Bigg)\Bigg\},\\
&\stackrel{(d)}\geq\sum_{t=1}^n\Big\{\sigma^2_{w_t}K_t+\alpha_t\beta_tL_tK_{t+1}D(R_t^*)\Big\}\triangleq\sum_{t=1}^n{\lqg}^*_t,
\end{split}\label{lower_bound_lqg}
\end{align}
where $(a)$ follows from the fact that  $Y_t$ is $\mathbb{S}_{1,t}-$measurable and the $\mmse$ is obtained for $Y_t={\bf E}\{X_t|S_{1,t}\}$; $(b)$ follows from the fact that ${\bf E}\{||X_t - {\bf E}\{X_t|S_{1,t}\}||_2^2\})={\bf E}_{\bar{S}_{1,t-1}}\left\{{\bf E}\{||X_t - {\bf E}\{X_t|S_{1,t}\}||_2^2|S_{1,t-1}=\bar{S}_{1,t-1}\}\right\}$, where ${\bf E}_{\bar{S}_{1,t}}\{\cdot\}$ is the expectation with respect to some vector $\bar{S}_{1,t-1}$ that is distributed similarly to $S_{1,t-1}$, also from the $\mse$ inequality in \cite[Theorem 17.3.2]{cover-thomas2006} and finally from the fact that $R^*_t\geq{0}$, where $R^*_t=\frac{1}{p}\left\{h^*(X_t|Y_{1,t-1})-h^*(X_t|Y_{1,t})\right\}$ (see the derivation of Theorem \ref{theorem:fundam_lim_mmse}, ({\bf 1})) with $h^*(X_t|Y_{1,t-1})$, $h^*(X_t|Y_{1,t})$ being the minimized values in \eqref{scalar:eq.solution}; $(c)$ follows from Jensen's inequality \cite[Theorem 2.6.2]{cover-thomas2006}, i.e., ${\bf E}_{\bar{S}_{1,t-1}}\left\{2^{\frac{2}{p}h(X_t|S_{1,t-1}=\bar{S}_{1,t-1})}\right\}\geq{2}^{\frac{2}{p}h(X_t|S_{1,t-1})}$; $(d)$ follows from the fact that $\{h(X_t|S_{1,t-1})=h(A_{t-1}X_{t-1}+B_{t-1}U_{t-1}+W_{t-1}|S_{1,t-1}):~t\in\mathbb{N}_2^n\}$ is completely specified from the independent Gaussian noise process $\{W_{t-1}:~t\in\mathbb{N}_2^n\}$ because $\{U_{t-1}=g_{t}(S_{1,t-1}):~t\in\mathbb{N}_2^n\}$ (see \eqref{example2:decoding}) are constants conditioned on $S_{1,t-1}$. Therefore, $h(X_t|S_{1,t-1})$ is conditionally Gaussian thus equivalent to $h(X_t|Y_{1,t-1})$. This further means that $\frac{1}{2\pi{e}}{2}^{\frac{2}{p}h(X_t|Y_{1,t-1})}2^{-2R^*_t}\geq\frac{1}{2\pi{e}}{2}^{\frac{2}{p}h^*(X_t|Y_{1,t-1})}2^{-2R^*_t}\stackrel{(\star)}=\frac{1}{2\pi{e}}{2}^{\frac{1}{p}\log_2(2\pi{e})^p|\Delta_t^*|}\stackrel{(\star\star)}=\min\{D_t\}\equiv{D}(R_t^*)$, where $(\star)$ follows because $h^*(X_t|Y_{1,t})=\frac{1}{2}\log_2(2\pi{e})^p|\Delta_t^*|$ and $(\star\star)$ follows because $\Delta_t^*=\diag(\min\{D_t\},\ldots,\min\{D_t\})$.}  
\par {It remains to find $D(R_t^*)$ at each time instant in \eqref{lower_bound_lqg}. To do so, we reformulate the solution of the dynamic reverse-waterfilling solution in \eqref{para_sol_eq.1} as follows:
\begin{align}
&nR\equiv{\cal R}_{\ssum}^{\IID,1}=\sum_{t=1}^n{R^*_t}\equiv\frac{1}{2}\sum_{t=1}^n \log_2\left(\frac{\lambda_t}{D_t}\right)\nonumber\\
&=\frac{1}{2}\left\{\underbrace{\cancelto{0}{\log_2(\lambda_1)}}_{\text{initial~step}}+\sum_{t=1}^{n-1}\log_2\left(\alpha^2_t+\frac{\sigma^2_{w_t}}{D_t}\right)-\underbrace{\log_2 D_n}_{\text{final~step}}\right\}.\label{reform_object_func}
\end{align} 
From \eqref{reform_object_func} we observe that at each time instant, the rate $R^*_t$ is a function of only one distortion $D_t$ since we have now decoupled the correlation with $D_{t-1}$. Moreover, we can assume without loss of generality, that the initial step is zero because it is independent of $D_0$. Thus, from \eqref{reform_object_func}, we can find at each time instant a $D_t\in(0,\infty)$ such that the rate is $R_t^*\in[{0},\infty)$. Since the rate distortion problem is equivalent to the distortion rate problem (see, e.g., \cite[Chapter 10]{cover-thomas2006}) we can immediately compute the total-distortion rate function, denoted by $D^{\IID,1}_{\ssum}(R)$, as follows: 
\begin{align}
D^{\IID,1}_{\ssum}(R)\triangleq\sum_{t=1}^nD(R_t^*)=\sum_{t=1}^{n-1} \frac{\sigma^2_{w_t}}{2^{2R^*_t}-\alpha_t^2}+2^{-2R^*_n}.\label{scalar:dist_rate1}
\end{align}
Substituting $D(R_t^*)$ at each time instant in \eqref{lower_bound_lqg} the result follows.}\\
This completes the proof. 

\section{Proof of Theorem \ref{theorem:achievability_lqg}}\label{proof:theorem:achievability_lqg}

{Note that from Lemma \ref{lemma:scalar:separation}, \eqref{total_min_lqg}, we obtain:
\begin{align}
\begin{split}
&{\Gamma^{\IID,\op,ws}_{\ssum}}=\frac{1}{p}\sum_{t = 1}^n\Big\{\trace(\Sigma_{W_t}\widetilde{K}_t)\\
&+\trace(A_tB_t\widetilde{L}_t \widetilde{K}_{t+1}{\bf E}\{||X_t - Y_t||_2^2\})\Big\}\\
&=\frac{1}{p}\sum_{t = 1}^n\Big\{\trace(\Sigma_{W_t}\widetilde{K}_t)\\
&+\trace(A_tB_t\widetilde{L}_t \widetilde{K}_{t+1}D(R^{\op}_t))\Big\}\\
&\stackrel{(a)}\leq\sum_{t=1}^{n-1}\left\{\sigma^2_{w_t}K_t+ \alpha_t\beta_t L_t K_{t+1}\frac{4^{\frac{1}{p}}(2\pi{e}G_p)\sigma^2_{w_t}}{2^{2R^{\op}_t}-4^{\frac{1}{p}}(2\pi{e}G_p)\alpha_t^2}\right\}\\
&+\sigma^2_{w_n}K_n,
\end{split}\label{proof:upper_bound_lqg_cost}
\end{align}
where $(a)$ {is obtained in two steps. As a first step, expand the inequality obtained in Theorem \ref{theorem:achievability}, \eqref{achievable_bound_ecdq} for the time horizon $n$ as follows
\begin{align}
R^{\op}_1+\ldots+R^{\op}_n\leq{R}_1^*+\ldots+R_n^*+c
\end{align}  
where $c=\frac{n}{p}\log_2(2\pi{e}G_p)+\frac{n}{p}$. As a second step, we reformulate $\{R_t^*:~t\in\mathbb{N}_1^n\}$ similar to \eqref{reform_object_func} (in the derivation of Theorem \ref{theorem:fund_limt_lqg}) so that we decouple the dependence  on $D_{t-1}$ at each time step.} Finally, for each $R^{\op}_t,~t=1,2\ldots,n,$ we solve the resulting inequality  with respect to $D(R_t^{\op})$ which gives
\begin{align}
\begin{split}
&D(R^{\op}_t)\leq\frac{4^{\frac{1}{p}}(2\pi{e}G_p)\sigma^2_{w_{t}}}{2^{2R^{\op}_{t}}-4^{\frac{1}{p}}(2\pi{e}G_p)\alpha_{t}^2},~t\in\mathbb{N}_1^{n-1}.
\end{split}\label{achievable_distortions}
\end{align}
Observe that the last step $t=n$ is not needed because in \eqref{total_min_lqg} we have ${K}_{n+1}=0$.
This completes the proof.}

{
\section*{Acknowledgement}
}
{The authors wish to thank the Associate Editor and the anonymous reviewers for their valuable comments and suggestions. They are also indebted to Prof. T. Charalambous for reading the paper and proposing the idea of bisection method for Algorithm \ref{algo1}. They are also grateful to Prof. J. {\O}stergaard for fruitful discussions on technical issues of the paper.}


\bibliographystyle{IEEEtran}

\bibliography{references}

\end{document}